\documentclass[journal]{IEEEtran}
\usepackage{amsmath,amsfonts,amssymb,bm}
\usepackage{algorithmic}
\usepackage{algorithm}
\usepackage{array}
\usepackage[subrefformat=parens, labelformat=parens, caption=false,font=footnotesize]{subfig}
\usepackage{textcomp}
\usepackage{stfloats}
\usepackage{url}
\usepackage{verbatim}
\usepackage{graphicx}
\usepackage{cite}
\usepackage{color}
\usepackage{xcolor}
\usepackage{framed}
\usepackage{hyperref}

\definecolor{darkgreen}{rgb}{0.0, 0.5, 0.0}

\definecolor{pccolor}{rgb}{0.0, 0.0, 0.5}

\definecolor{pcQcolor}{rgb}{0.0, 0.0, 1.0}

\usepackage{amsthm}
\theoremstyle{definition}
\theoremstyle{definition}
\theoremstyle{definition}\newtheorem{theorem}{Theorem}
\theoremstyle{definition}
\newtheorem{define}{Definition}
\newtheorem{example}{Example}

\newtheorem{remark}{Remark}
\def\main{1}\ifx\main\undefined
\ifx\singleSec\undefined
\documentclass[journal,onecolumn,12pt]{IEEEtran}
\usepackage{graphicx,cite}
\usepackage{amssymb,amsmath,yhmath}

\begin{document}
\fi
\fi

\newcommand{\ben}{\begin{eqnarray}}
\newcommand{\een}{\end{eqnarray}}

\DeclareMathOperator{\diag}{diag}
\DeclareMathOperator{\E}{E}
\newcommand{\vect}{\mathrm{vec}}

\newcommand{\bl}{\color{blue}}
\newcommand{\rd}{\color{red}}
\newcommand{\bk}{\color{black}}

\newcommand{\real}{\mathrm{Re}}
\newcommand{\imag}{\mathrm{Im}}

\newcommand{\va}{{\bf a}}
\newcommand{\vb}{{\bf b}}
\newcommand{\vc}{{\bf c}}
\newcommand{\vd}{{\bf d}}
\newcommand{\ve}{{\bf e}}
\newcommand{\vf}{{\bf f}}
\newcommand{\vg}{{\bf g}}
\newcommand{\vh}{{\bf h}}
\newcommand{\vi}{{\bf i}}
\newcommand{\vj}{{\bf j}}
\newcommand{\vk}{{\bf k}}
\newcommand{\vl}{{\bf l}}
\newcommand{\vm}{{\bf m}}
\newcommand{\vn}{{\bf n}}
\newcommand{\vo}{{\bf o}}
\newcommand{\vp}{{\bf p}}
\newcommand{\vq}{{\bf q}}
\newcommand{\vr}{{\bf r}}
\newcommand{\vs}{{\bf s}}
\newcommand{\vt}{{\bf t}}
\newcommand{\vu}{{\bf u}}
\newcommand{\vv}{{\bf v}}
\newcommand{\vw}{{\bf w}}
\newcommand{\vx}{{\bf x}}
\newcommand{\vy}{{\bf y}}
\newcommand{\vz}{{\bf z}}
\newcommand{\zv}{{\bf 0}}

% caligraphy fonts
\newcommand{\cA}{{\cal A}}
\newcommand{\cB}{{\cal B}}
\newcommand{\cC}{{\cal C}}
\newcommand{\cD}{{\cal D}}
\newcommand{\cE}{{\cal E}}
\newcommand{\cF}{{\cal F}}
\newcommand{\cG}{{\cal G}}
\newcommand{\cH}{{\cal H}}
\newcommand{\cI}{{\cal I}}
\newcommand{\cJ}{{\cal J}}
\newcommand{\cK}{{\cal K}}
\newcommand{\cL}{{\cal L}}
\newcommand{\cM}{{\cal M}}
\newcommand{\cN}{{\cal N}}
\newcommand{\cO}{{\cal O}}
\newcommand{\cP}{{\cal P}}
\newcommand{\cQ}{{\cal Q}}
\newcommand{\cR}{{\cal R}}
\newcommand{\cS}{{\cal S}}
\newcommand{\cT}{{\cal T}}
\newcommand{\cU}{{\cal U}}
\newcommand{\cV}{{\cal V}}
\newcommand{\cW}{{\cal W}}
\newcommand{\cX}{{\cal X}}
\newcommand{\cY}{{\cal Y}}
\newcommand{\cZ}{{\cal Z}}

% Greek letters
\newcommand{\bs}[1]{\boldsymbol{#1}}
\newcommand{\gS}{\boldsymbol\Sigma}
\newcommand{\ga}{\alpha}
\newcommand{\gb}{\beta}
\newcommand{\ggm}{\gamma}
\newcommand{\gd}{\delta}
\newcommand{\gl}{\lambda}
\newcommand{\gr}{\rho}
\newcommand{\gs}{\sigma}
\newcommand{\gq}{\theta}
\newcommand{\gQ}{\bs{\theta}}

% Dots
\newcommand{\dd}{\ddots}
\newcommand{\dc}{\cdots}
\newcommand{\dv}{\vdots}
\newcommand{\da}{\adots}

% Sets of Numbers
\newcommand{\bC}{{\mathbb{C}}}
\newcommand{\bR}{{\mathbb{R}}}
\newcommand{\bZ}{{\mathbb{Z}}}
\newcommand{\bQ}{{\mathbb{Q}}}
\newcommand{\bN}{{\mathbb{N}}}

% Logical
\newcommand{\ra}{\rightarrow}
\newcommand{\Ra}{\Rightarrow}

\newenvironment{mat}[1]{\left[\begin{array}{#1}}{\end{array}\right]}
\newcommand{\col}[1]{\begin{mat}{c}#1\end{mat}}
\newcommand{\mx}[1]{\begin{mat}{ccccccc}#1\end{mat}}
\newcommand{\mf}[2]{\begin{mat}{#1}#2\end{mat}}
\newcommand{\set}[1]{\left\{#1\right\}}
\newcommand{\setc}[2]{\left\{{#1}{ : }{#2}\right\}}
\newcommand{\ip}[2]{\left<#1\right>_{\{#2\}}}
\newcommand{\toe}[2]{{\cal T}_{#1}^{#2}}

\newcommand{\ceil}[1]{\left\lceil#1\right\rceil}
\newcommand{\floor}[1]{\left\lfloor#1\right\rfloor}
\newcommand{\pp}[1]{\left(#1\right)}
\newcommand{\pb}[1]{\left[#1\right]}

%% Tensor related commands

\newcommand{\mz}[3]{{{#1}}^{\set{#2}}_{\set{#3}}}  % matricize
\newcommand{\ttv}[3]{{#1}\bar\times_{#3}{#2}}
\newcommand{\ttm}[3]{{#1}\times_{#3}{#2}}
\newcommand{\ttt}[4]{\left<{#1},{#2}\right>_{\left\{#3;#4\right\}}}
\newcommand{\tpm}[2]{{#1}^{\set{#2}}}
\renewcommand{\tt}[2]{{\cT_{#1}^{#2}}}
\newcommand{\ttq}[3]{{\cT_{#1,#2}^{#3}}}
\newcommand{\ttd}[4]{{\cT_{#1,#2}^{#3,#4}}}

%% Usually used
\newcommand{\MQ}{{M_Q}}
\newcommand{\PQ}{{P_Q}}
\newcommand{\DQ}{{D_Q}}
\newcommand{\deq}{\triangleq}
\newcommand{\aeq}{\approx}
\renewcommand{\t}{\times}
\newcommand{\eq}[2]{\begin{equation}\label{#1}#2\end{equation}}
\newcommand{\eqa}[1]{\begin{eqnarray*}#1\end{eqnarray*}}

% Mbox functions
\newcommand{\rk}{\mbox{rank }}

% General function
\newcommand{\gf}[2]{\mbox{#1}\left(#2\right)}

\newcommand{\mA}{{\bf A}}
\newcommand{\mB}{{\bf B}}
\newcommand{\mC}{{\bf C}}
\newcommand{\mD}{{\bf D}}
\newcommand{\mE}{{\bf E}}
\newcommand{\mF}{{\bf F}}
\newcommand{\mG}{{\bf G}}
\newcommand{\mH}{{\bf H}}
\newcommand{\mI}{{\bf I}}
\newcommand{\mJ}{{\bf J}}
\newcommand{\mK}{{\bf K}}
\newcommand{\mL}{{\bf L}}
\newcommand{\mM}{{\bf M}}
\newcommand{\mN}{{\bf N}}
\newcommand{\mO}{{\bf O}}
\newcommand{\mP}{{\bf P}}
\newcommand{\mQ}{{\bf Q}}
\newcommand{\mR}{{\bf R}}
\newcommand{\mS}{{\bf S}}
\newcommand{\mT}{{\bf T}}
\newcommand{\mU}{{\bf U}}
\newcommand{\mV}{{\bf V}}
\newcommand{\mW}{{\bf W}}
\newcommand{\mX}{{\bf X}}
\newcommand{\mY}{{\bf Y}}
\newcommand{\mZ}{{\bf Z}}

\newcommand{\DmY}{{\Delta{\bf Y}}}
\newcommand{\DmR}{{\Delta{\bf R}}}
\newcommand{\DcU}{{\Delta{\cal U}}}
\newcommand{\DcY}{{\Delta{\cal Y}}}
\newcommand{\Dvh}{{\Delta{\bf h}}}
\newcommand{\Dvy}{{\Delta{\bf y}}}
\newcommand{\RxQ}{{\mR_{\mX,Q}}}
\newcommand{\YQJ}{{\cY_Q^{(J)}}}
\newcommand{\XQJ}{{\cX_Q^{(J)}}}
\newcommand{\DmUn}{\Delta\mU_n}
\newcommand{\Rss}{\mR_{\tilde\vs\tilde\vs^\dagger}}

\newtheorem{def1}{Definition}
\newtheorem{lem}{Lemma}
\newtheorem{thm}{Theorem}
\newtheorem{cor}{Corollary}

\newcommand{\tplz}[4]{\mathcal{T}_{#1}^{(#2,#3)}\left(#4\right)}
\newcommand{\tpz}[2]{\mathcal{T}_{#1}\left(#2\right)}

\newcommand{\ignore}[1]{}

\ifx\main\undefined
\ifx\singleSec\undefined
newcommands.tex (ver 1.0, 2015/12/18)\\

\section{Usage}
To use this file, type
\begin{verbatim}
\def\main{1}
\input{newcommands}
\end{verbatim}
in your main file. \\
Here, the first line declares that your file is the main file you're working on so that the demo text shown here will not show up in your main text.

\section{Demo}
This file provides the following notations:

\subsection{Symbols}
Vectors:
$\va\vb\vc\vd\ve\vf\vg\vh\vi\vj\vk\vl\vm\vn\vo\vp\vq\vr\vs\vt\vu\vv\vw\vx\vy\vz$.\\

Matrices:
$\mA\mB\mC\mD\mE\mF\mG\mH\mI\mJ\mK\mL\mM\mN\mO\mP\mQ\mR\mS\mT\mU\mV\mW\mX\mY\mZ$.
\\

Tensors
$\cA\cB\cC\cD\cE\cF\cG\cH\cI\cJ\cK\cL\cM\cN\cO\cP\cQ\cR\cS\cT\cU\cV\cW\cX\cY\cZ$\\

Greek
$ \ga \gb \ggm \gd\gl \gr \gs$\\

BoldGreek
$\gS \bs{\ga \gb \ggm \gd\gl \gr \gs}$

Dots
$\dd\dc\dv\da$

\subsection{Environments}
Set: $\set{1,2,3}$\\
Set with codition: $\setc{\vx}{\mA\vx=\zv}$\\
Pair of parenthesis: $\pp{1+1}$, $$\pp{{1\over 2}}$$.\\
Pair of brackets: $\pb{\vv}_1$.

\subsection{Tensors}
TTV: $\ttv{\cA}{\vv}{1}.$\\
TTM: $\ttm{\cA}{\mM}{2}.$\\
TTT: $\ttt{\cA}{\cB}{1,2}{3,4}.$\\
Tensor permutation: $\tpm{\cA}{1,3,2}, \tpm{ \ttt{\cA}{\cB}{1,2}{3,4}}{1,3,2} $.

\end{document}

\fi
\fi

\begin{document}

\title{Fundamental Limits of MIMO ISAC: An Antenna Array Architecture Perspective}

\author{Po-Chih Chen,~\IEEEmembership{Member,~IEEE,} Ming-Chun Lee,~\IEEEmembership{Member,~IEEE,} Yu-Chih Huang,~\IEEEmembership{Senior Member,~IEEE}
        % <-this % stops a space
\thanks{This work has been submitted to the IEEE for possible publication. Copyright may be transferred without notice, after which this version may no longer be accessible.}% <-this % stops a space
\thanks{A preliminary version of this work, which considers only the angle-domain model without waveform-specific analysis, will be submitted to a conference.}% <-this % stops a space
\thanks{This work was supported by the National Science and Technology Council of Taiwan under grant NSTC 114-2224-E-A49-002 and grant NSTC 115-2221-E-A49-077.}% <-this % stops a space
\thanks{The authors are with the Institute of Communications Engineering, National Yang Ming Chiao Tung University, Hsinchu City 300093, Taiwan (email: pochih@nycu.edu.tw; mingchunlee@nycu.edu.tw; jerryhuang@nycu.edu.tw).}% <-this % stops a space
% \thanks{Manuscript received April 19, 2021; revised August 16, 2021.}
}

% The paper headers
% \markboth{Journal of \LaTeX\ Class Files,~Vol.~14, No.~8, August~2021}%
% {Shell \MakeLowercase{\textit{et al.}}: A Sample Article Using IEEEtran.cls for IEEE Journals}

% \IEEEpubid{0000--0000/00\$00.00~\copyright~2021 IEEE}
% Remember, if you use this you must call \IEEEpubidadjcol in the second
% column for its text to clear the IEEEpubid mark.

\maketitle

\begin{abstract}
This paper investigates the fundamental limits of MIMO integrated sensing and communications (ISAC) systems, specifically comparing sparse arrays (SAs) against conventional uniform linear arrays (ULAs).
A unified theoretical analysis of ergodic channel capacity and the Cram\'{e}r--Rao bound (CRB) for angle estimation is developed while accounting for the array geometry. Utilizing the framework of stochastic majorization, the study reveals that SAs consistently outperform ULAs by creating a more ``uniform" spatial eigenvalue distribution, which decorrelates the multipath environment and increases communication capacity. For sensing, the paper proves that the angle CRB is inversely proportional to the array's second-order central moment of antenna positions asymptotically, demonstrating that SAs achieve superior accuracy---improving by up to the square of the number of antennas---due to their increased physical aperture.
These analyses and conclusions are demonstrated to be also valid for MIMO ISAC systems employed with the modern waveforms OTFS and OFDM, suggesting that spatial geometry, rather than waveform, is the primary driver of fundamental performance gains in the spatial dimension.
\end{abstract}

\begin{IEEEkeywords}
ISAC, sparse arrays, capacity, Cram\'{e}r--Rao bound, OTFS, OFDM.
\end{IEEEkeywords}

\section{Introduction}
\IEEEPARstart{T}{he} convergence of sensing and communications into a unified framework, known as integrated sensing and communications (ISAC) \cite{ZhangRahman,Cui,LiuCui,Lu}, has emerged as a cornerstone for next-generation wireless networks (6G). By sharing the same time-frequency resources and hardware infrastructure, ISAC systems promise significantly higher spectral and hardware efficiency.
Meanwhile, multiple-input-multiple-output (MIMO) technology, which provides spatial multiplexing gain and diversity gain over single-antenna systems, is a key technology for modern wireless communications. Consequently, MIMO ISAC has also attracted considerable research interest. Recently, MIMO ISAC systems have been extensively studied in various aspects, such as novel target parameter estimation methods \cite{XiaoSwindlehurst} and beamforming design \cite{Liao2025Design}.

One aspect that is crucial but historically less considered for MIMO ISAC is the antenna array geometry.
While the uniform linear array (ULA) is the de facto standard in communication systems due to its simplicity, the sensing community has long recognized the advantages of sparse arrays (SA) \cite{LiuCRB,ChenRank,ChenHybridCBS1}, such as nested arrays \cite{PiyaPPVnested}, coprime arrays \cite{PPVCoprimePaper}, and minimum redundancy arrays (MRA) \cite{Moffet}. By exploiting nonuniform element spacing, SAs can provide higher angular resolution and more degrees of freedom than a ULA with the same number of physical elements \cite{LiuCRB}. Despite these benefits in sensing, the impact of sparse geometries on communication capacity remains under-explored, as pointed out in the literature \cite{LiMin}, particularly in the context of joint ISAC waveforms.

While SAs have been studied in the context of precoding and antenna selection \cite{Wu2025Correlation,LiuSwindlehurst,Sankar}, the resulting highly non-convex optimization problems preclude a clear analytical characterization of the performance gains of SAs over ULAs.
The potential benefits of SAs for communications in terms of higher capacity and lower inter-user interference are shown in \cite{WangZeng} and analyzed via the effective degree of freedom (DoF) in \cite{WangFeng}, but the results therein are limited to line-of-sight dominant channels and sparse ULAs (ULAs with antenna spacing larger than half wavelength).
A hybrid processing framework for sparse MIMO ISAC with physical array based communication and difference coarray based sensing is proposed in \cite{Min}.
With the main lobe width and side lobe height characterized, it is shown that despite undesired grating lobes, SAs can enhance communication due to their narrower main lobe widths than ULAs \cite{Min}.
In these works \cite{LiuSwindlehurst,Sankar,WangZeng,WangFeng,Min}, specific modern waveforms like orthogonal time frequency space (OTFS) and orthogonal frequency division multiplexing (OFDM) are not explicitly considered in their ISAC models.

Waveform design is another problem in ISAC systems. Due to the abundant use of OFDM in current 4G and 5G networks, OFDM-based ISAC systems have been particularly widely studied \cite{XiaoSwindlehurst,Liao2025Design,Dai}.
However, to deploy ISAC in high-dynamics scenarios, such as vehicle-to-everything (V2X) communications \cite{Du} and low-earth orbit (LEO) satellite networks \cite{Kodheli}, OTFS \cite{Raviteja2018,Raviteja2019,WeiYuan} has been proposed as a robust alternative to OFDM to address the challenges posed by severe Doppler shifts and rapidly time-varying channels. For instance, transmit beamforming and symbol detection methods are developed in an OTFS ISAC transmission scheme for vehicular networks in \cite{Yuan}.
Evaluating the impact of array architecture on MIMO ISAC systems requires confirming that our conclusions remain valid when deployed with these modern waveforms.

In this paper, we investigate the fundamental limits of MIMO ISAC systems employing sparse antenna arrays.
To reveal the essential effects of array geometry on the ISAC system, we first focus on a snapshot-based spatial MIMO model.
We provide a unified analysis of the ergodic channel capacity and the Cram\'{e}r--Rao bound (CRB) for angle estimation in MIMO ISAC systems based on the geometric multipath MIMO channel model.
Indeed, modern ISAC systems are increasingly envisioned to operate in the millimeter-wave (mmWave) bands to leverage the vast available bandwidth for high-resolution sensing and high-rate communications \cite{Du,Cui,Lu}. Unlike lower frequency bands, mmWave channels exhibit a sparse scattering nature where signals propagate through a limited number of dominant paths. Consequently, the geometric multipath MIMO channel model---which explicitly accounts for the physical angles of these discrete paths---is widely adopted \cite{Heath,ChenHybridCBS2,XiaoSwindlehurst} to accurately characterize such systems.
By employing a geometric MIMO channel model, we can map the physical layout of the antenna array, whether uniform or sparse, directly onto the spatial correlation and eigenvalue distribution of the channel, thereby revealing the fundamental limits of each array geometry.

To provide a rigorous comparison between array geometries that transcends specific channel realizations, we employ the framework of stochastic majorization \cite{MarshallMajorization}. Our analysis is based on a majorization criterion applied to the eigenvalues of a spatial correlation matrix. This matrix is efficiently computed directly from the antenna locations and the probability distributions of the directions of departure (DODs) and directions of arrival (DOAs), without the need of exhaustive Monte Carlo simulations. This analytical tractability enables us to show that the majorization criterion is satisfied when comparing SAs to ULAs under most practical channel conditions.
Physically, the squared Frobenius norm of this spatial correlation matrix is intrinsically linked to the beam pattern of the array and the effective DoF \cite{WangFeng,Muharemovic} of the channel matrix. By analyzing the eigenvalues of the expected Gram matrix of the channel, we demonstrate that a ULA's compact structure leads to highly correlated channels and ``peaked'' eigenvalues. By contrast, SAs yield a more ``uniform'' eigenvalue distribution, which corresponds to a broader distribution of spatial energy. This spatial whitening effect indicates that sparse geometries can better decorrelate the multipath environment, thereby maximizing the fundamental communication capacity.

Regarding sensing performance, we focus on the fundamental limits of angle estimation of targets.
By casting the received sensing signal into the standard DOA estimation framework, we derive the \textit{conditional CRB} \cite{LiuCRB,StoicaNehoraiMusicMLCRB,StoicaStochasticCRB} for the angle estimates.
Specifically, we derive the Fisher information matrix for the unknown parameters and take the Schur complement of the block corresponding to the angle parameters.
By isolating the spatial contribution to sensing accuracy, we prove that the angle CRB is inversely proportional to the array's second-order central moment of antenna positions. This provides a clear and elegant insight: the superior sensing performance of SAs is a direct consequence of their increased physical aperture, even when the number of antennas remains constant across different waveforms.

Finally, we extend our analysis to MIMO ISAC models with two modern waveforms OTFS and OFDM. When these waveforms are considered, ranges and Doppler shifts naturally come into the model, but we still focus on angle parameters for sensing. We demonstrate that the main results of capacity and angle CRB comparison between SAs and ULAs are also valid for MIMO-OTFS and MIMO-OFDM ISAC systems.

Our main contributions are summarized as follows:

\begin{itemize}
    \item We establish a unified theoretical framework to analyze the ergodic channel capacity of MIMO ISAC systems. By leveraging the theory of stochastic majorization, we show that SAs consistently outperform conventional ULAs. We show that SAs induce a ``spatial whitening effect" that decorrelates the multipath channel, leading to a more uniform distribution of spatial eigenvalues and thus significantly higher capacity. The capacity gain is larger if the multiple paths are more clustered in the angle domain.
    \item For the sensing aspect, we derive the closed-form CRB for angle estimation. Our analysis reveals that the sensing accuracy is fundamentally governed by the second-order central moment of antenna positions. We demonstrate that typical SAs like nested arrays \cite{PiyaPPVnested} and coprime arrays \cite{PPVCoprimePaper} can achieve an accuracy gain proportional to the square of the number of sensing receive antennas compared to ULAs, due to their significantly larger physical aperture. 
    % This provides a theoretical foundation for designing high-precision sensing systems without increasing the number of RF chains.
    \item With a unified analysis, we show that the derived advantages in capacity and angle CRB of SAs over ULAs can be extended to MIMO-OTFS and MIMO-OFDM ISAC systems. While prior work like \cite{Tao} has studied the channel capacity of MIMO-OTFS systems, it lacks a discussion on the impact of array architectures.
\end{itemize}

% \textit{Paper outline:}

\section{System Models for MIMO ISAC Systems with Array Structure}
\label{sec:model}

In this section, we describe the necessary system and signal models of MIMO ISAC systems required for the subsequent analysis.
In general, we consider an ISAC transmitter that simultaneously supports communication with a receiver and monostatic sensing. The system consists of $N_{\mathrm{t}}$ transmit antennas and $N_{\mathrm{r}}$ communication receive antennas, and $N_{\mathrm{s}}$ monostatic radar receive antennas are equipped separately at the transmitter, denoted as sensing receiver, for monostatic sensing.
In addition, we consider $D_{\mathrm{c}}$ scattering paths for communication and $D_{\mathrm{s}}$ paths for sensing targets, and the communication and sensing channels are constructed by these paths with different DODs and DOAs, as illustrated in Fig. \ref{fig:MIMO_channel}.
We assume the scattering paths and their DODs and DOAs are time-invariant within a coherent processing interval (CPI) for the analysis. 
% However, it should be stressed that while the parameters are considered time-invariant, the channel coefficients could still be time-varying due to the occurrence of Doppler effect.
We denote $\theta_{\mathrm{t},i}$ and $\theta_{\mathrm{r},i}$ as the angle of scatterer $i$ with respect to the transmitter and communication receiver, respectively.
To show the fundamental effects of sparse arrays on the system, we mainly consider a snapshot-based spatial MIMO model. In Sec. \ref{sec:waveforms}, we will show that the main conclusions derived from the MIMO model still hold when two common modern waveforms OTFS and OFDM are incorporated into the model.

% \vspace{1mm}
\subsection{Communication Signal Model}

The received communication signal for $K$ independent snapshots within a CPI of a MIMO system is given by
\begin{equation}
    \mathbf{y}_{\mathrm{c}}[k] = \mathbf{H}_{\mathrm{c}}[k] \mathbf{x}[k] + \mathbf{w}_{\mathrm{c}}[k],
\end{equation}
for $k = 1, 2, \dots, K$, where
\begin{equation}
    \mathbf{H}_{\mathrm{c}}[k] = \sum_{i=1}^{D_{\mathrm{c}}} \alpha_{\mathrm{c},i}[k] \mathbf{a}_{\mathrm{r}}(\omega_{\mathrm{r},i})\mathbf{a}_{\mathrm{t}}^H(\omega_{\mathrm{t},i}) \label{eq:H_c_k}
\end{equation}
is the geometric MIMO communication channel matrix for snapshot $k$, $\alpha_{\mathrm{c},i}[k]$ is the channel coefficient of scatterer $i$; $\mathbf{x}[k] \in \mathbb{C}^{N_{\mathrm{t}} \times 1}$ is the transmitted symbols; $\mathbf{w}_{\mathrm{c}}[k]$ is the additive complex white Gaussian noise (AWGN) at the communication receiver; $\mathbf{a}_{\mathrm{r}}(\omega_{\mathrm{r},i})$ and $\mathbf{a}_{\mathrm{t}}(\omega_{\mathrm{t},i})$ are the receive and transmit steering vectors under DOA $\omega_{\mathrm{r},i}$ and DOD $\omega_{\mathrm{t},i}$ of scatterer $i$, respectively, whose general expression is given as
\begin{equation}
    \va_q(\omega) = [1\; e^{j\omega n_{q,1}}\; \cdots\; e^{j\omega n_{q,N_q-1}}]^T, q\in\{\mathrm{t},\mathrm{r}\},\label{eq:a_q_w}
\end{equation}
$\omega_{q,i} = 2\pi d\sin\theta_{q,i}/\lambda_c$, and $\lambda_c$ is the wavelength. Note that $d$ is the antenna unit spacing that is given by $d=\lambda_c/2$ in this paper. In addition, to allow the expression of an arbitrary antenna array architecture, locations of antenna elements in the antenna array are denoted as
$\{0, n_{q,1}d, \dots, n_{q,N_{q}-1}d\}$, $q\in\{\mathrm{t},\mathrm{r}\}$,
% $\{0, n_{\mathrm{t},1}d, \dots, n_{\mathrm{t},N_{\mathrm{t}}-1}d\}$ and $\{0, n_{\mathrm{r},1}d, \dots, n_{\mathrm{r},N_{\mathrm{r}}-1}d\}$
for transmit and receive antenna arrays, respectively.

\begin{figure}[!t]
\centering
\includegraphics[width=2.8in]{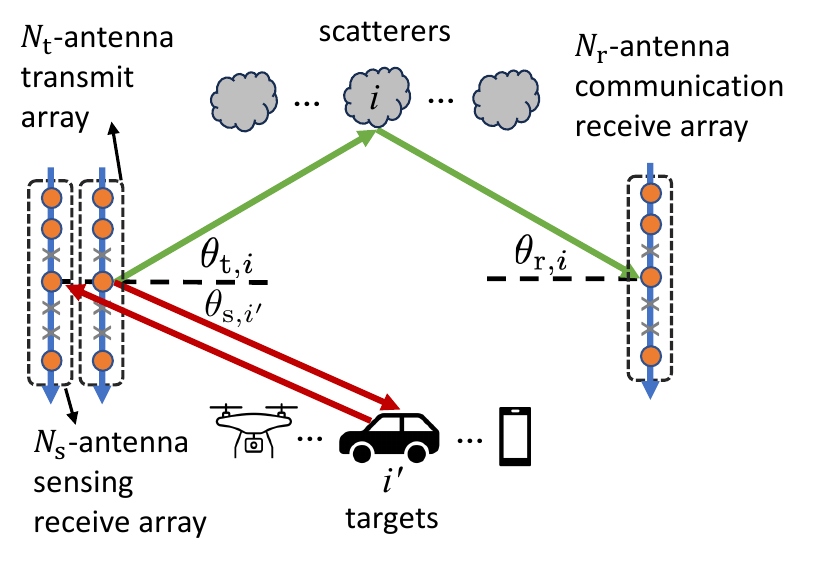}
\vspace{-3mm}
\caption{MIMO ISAC channel model.}
\label{fig:MIMO_channel}
\end{figure}

\subsection{Sensing Signal Model}

The received sensing signal for $K$ independent snapshots within a CPI of a MIMO monostatic sensing system is
\begin{equation}
    \mathbf{y}_{\mathrm{s}}[k] = \mathbf{H}_{\mathrm{s}}[k] \mathbf{x}[k] + \mathbf{w}_{\mathrm{s}}[k], \label{eq:y_s_k}
\end{equation}
for $k = 1, 2, \dots, K$, where
\begin{equation}
    \mathbf{H}_{\mathrm{s}}[k] = \sum_{i=1}^{D_{\mathrm{s}}} \alpha_{\mathrm{s},i}[k] \mathbf{a}_{\mathrm{s}}(\omega_{\mathrm{s},i})\mathbf{a}_{\mathrm{t}}^H(\omega_{\mathrm{s},i})
\end{equation}
is the MIMO sensing channel matrix for snapshot $k$, $\alpha_{\mathrm{s},i}[k]$ is the channel coefficient for the sensing path of target $i$, $\mathbf{w}_{\mathrm{s}}[k]$ is the AWGN at the sensing receiver, $\omega_{\mathrm{s},i} = 2\pi d\sin\theta_{\mathrm{s},i}/\lambda_c$, and $\mathbf{a}_{\mathrm{s}}(\omega_{\mathrm{s},i})$ follows again the expression in \eqref{eq:a_q_w} in consideration of the received sensing antenna array architecture.
Finally, we assume that the path gains $\alpha_{z,i}[k]$ are independent and identically distributed (i.i.d.) over snapshots and uncorrelated with zero mean and variance $\sigma_{\alpha,z}^2$, such that $\E[\alpha_{z,i}[k] \alpha_{z,j}^*[k]] = \sigma_{\alpha,z}^2 \delta_{ij}$, $z \in \{\mathrm{c}, \mathrm{s}\}$.

In the following, we analyze the capacity and angle CRB of the MIMO ISAC system separately, essentially assuming that the communication and sensing channels are independent. Investigating the case where the communication and sensing channels are correlated is regarded as an important future direction.

In this paper, we assume that the antenna locations are all integer multiples of half-wavelength so that $n_{q,i}$ in (\ref{eq:a_q_w}) are integers for all $i$. Then, an $N$-antenna ULA is defined by the antenna locations $\{0, 1, \dots, N - 1\}$, and a SA is, formally speaking, any array with an aperture larger than that of the ULA. However, for typical SAs we consider, the aperture scales as $O(N^2)$, as shown below.
\begin{define}[Nested array \cite{PiyaPPVnested} and coprime array \cite{PPVCoprimePaper}] \label{def:SA}
The antenna locations of a nested array (NA) are $\{0, 1, \dots, N/2 - 1\} \cup \{k(N/2 + 1) - 1 \mid k = 1, 2, \dots, N/2\}$, where $N$ is assumed an even number.
The antenna locations of a coprime array (CA) are $\{k M_{\mathrm{co}} \mid k = 0, 1, \dots, N_{\mathrm{co}} - 1\} \cup \{k N_{\mathrm{co}} \mid k = 1, 2, \dots, M_{\mathrm{co}}\}$, where $M_{\mathrm{co}}$ and $N_{\mathrm{co}} = N - M_{\mathrm{co}}$ are assumed coprime and in $O(N)$.
\end{define}
% \begin{IEEEeqnarray}{rCl}
%     n_{q,i} &=& \begin{cases}
%         i, & 0 \leq i \leq \frac{N_q}{2} - 1 \\
%         (i - \frac{N_q}{2} + 1)(\frac{N_q}{2} + 1) - 1, & \frac{N_q}{2} \leq i \leq N_q - 1.
%     \end{cases} \IEEEeqnarraynumspace
% \end{IEEEeqnarray}
% \begin{IEEEeqnarray}{rCl}
%     \left\{0, 1, \dots, \frac{N}{2} - 1\right\} \cup \left\{k(\frac{N}{2} + 1) - 1 \mid k = 1, 2, \dots, \frac{N}{2}\right\} \IEEEeqnarraynumspace
% \end{IEEEeqnarray}

\section{Proposed Majorization Methodology for Channel Capacity} \label{sec:capacity}

In this section, we develop a general analytical framework for comparing the ergodic capacity of different array architectures by leveraging tools from stochastic majorization theory and Ky Fan norms. This framework enables a principled comparison of the achievable ergodic channel capacities between SAs and ULAs, thereby revealing fundamental performance differences between the two array configurations.
% through the framework of stochastic majorization, moving beyond discrete scenario-based approximations.

\subsection{MIMO Channel Capacity Function}
We consider the MIMO communication channel \eqref{eq:H_c_k}.
Let $\mathbf{G} = \mathbf{H}_{\mathrm{c}}\mathbf{H}_{\mathrm{c}}^H$ be the Gram matrix, and let $\boldsymbol{\lambda} = [\lambda_1\; \cdots\; \lambda_{D_{\mathrm{c}}}]^T$ collect its eigenvalues. Here, we drop the snapshot index $k$ since $\alpha_{\mathrm{c},i}[k]$ and thus $\mathbf{H}_{\mathrm{c}}[k]$ are i.i.d. over snapshots.
Under equal power allocation, the capacity corresponding to $\mathbf{H}_{\mathrm{c}}$ is then characterized by $\boldsymbol{\lambda}$ and is given by
\begin{equation}
    C(\boldsymbol{\lambda}) = \sum_{i=1}^{D_{\mathrm{c}}} \log_2\left(1 + \rho \lambda_i(\mathbf{G})\right)
    \label{eq:capacity_def}
\end{equation}
where $\rho = P / (N_{\mathrm{t}} \sigma^2) > 0$ denotes the per-subchannel SNR. %We note that the function %$\Phi(\boldsymbol{\lambda}) = $\Phi(\boldsymbol{\lambda}) :=\sum_i \log_2(1 + \rho \lambda_i)$ is strictly \textit{Schur-concave} due to the concavity of the logarithm. 
The ergodic capacity is given by $\E[C(\boldsymbol{\lambda})]$.

\subsection{Majorization Theory and Ky Fan \texorpdfstring{$k$}{k}-Norm}
We now briefly review majorization theory, which will be useful in the subsequent analysis.

% --- Definition of Majorization ---
\begin{define}[Majorization \cite{MarshallMajorization}] \label{def:major}
For two vectors $\mathbf{x}, \mathbf{y} \in \mathbb{R}^n$ with components sorted in non-increasing order (i.e., $x_1 \geq x_2 \geq \dots \geq x_n$ and $y_1 \geq y_2 \geq \dots \geq y_n$), we say that $\mathbf{x}$ is majorized by $\mathbf{y}$, denoted as $\mathbf{x} \prec \mathbf{y}$, if:
\begin{enumerate}
    \item $\sum_{i=1}^{k} x_i \leq \sum_{i=1}^{k} y_i$ for $k = 1, \dots, n-1$,
    \item $\sum_{i=1}^{n} x_i = \sum_{i=1}^{n} y_i$.
\end{enumerate}
\end{define}

% --- Statement of Karamata's Inequality ---
For two vectors such that one is majorized by the other, the following Karamata's inequality holds for any convex (or concave) function.
\begin{theorem}[Karamata's inequality \cite{MarshallMajorization}]\label{thm:Kara}
Let $f: I \to \mathbb{R}$ be a convex function on an interval $I \subseteq \mathbb{R}$. If $\mathbf{x}, \mathbf{y} \in I^n$ such that $\mathbf{x} \prec \mathbf{y}$, then:
\begin{equation}
    \sum_{i=1}^{n} f(x_i) \leq \sum_{i=1}^{n} f(y_i).
\end{equation}
Conversely, if $f$ is a concave function (such as the $\log_2(1 + \rho x)$ function used in channel capacity), then:
\begin{equation}
    \sum_{i=1}^{n} f(x_i) \geq \sum_{i=1}^{n} f(y_i).
\end{equation}
\end{theorem}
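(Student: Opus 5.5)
The plan is the classical Abel-summation (Hardy--Littlewood--P\'olya) argument. By replacing $f$ with $-f$, the concave case follows directly from the convex case, so we treat only convex $f$. Assume $\mathbf{x}\prec\mathbf{y}$, with both vectors sorted in non-increasing order as in Definition~\ref{def:major}. If $x_i=y_i$ for some $i$, the corresponding terms cancel from both sides and can be discarded, so we may assume $x_i\neq y_i$ for all $i$. For each $i$ define the divided difference
\begin{equation*}
c_i=\frac{f(y_i)-f(x_i)}{y_i-x_i},
\end{equation*}
which gives $f(y_i)-f(x_i)=c_i(y_i-x_i)$ for every $i$.

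The first key step is to show that $c_1\ge c_2\ge\dots\ge c_n$. This is where convexity enters. For a convex $f$ on $I$, the slope of the chord between $a$ and $b$ is non-decreasing in each endpoint (the three-chord lemma). Since both $x_i$ and $y_i$ are non-increasing in $i$, the chord over $\{x_i,y_i\}$ has endpoints that are each at least as large as the corresponding endpoints of the chord over $\{x_{i+1},y_{i+1}\}$, so its slope is at least as large. This bookkeeping is the main obstacle: the two endpoints of a chord may be ordered differently for different $i$, so we should state the monotonicity of the slope in each argument symmetrically. We also need all points to lie in $I$, which holds by hypothesis.

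The second step is Abel summation. Set $X_k=\sum_{i\le k}x_i$, $Y_k=\sum_{i\le k}y_i$, and $X_0=Y_0=0$. Then
\begin{equation*}
\sum_{i=1}^n \bigl(f(y_i)-f(x_i)\bigr)=\sum_{i=1}^n c_i\bigl[(Y_i-X_i)-(Y_{i-1}-X_{i-1})\bigr]=\sum_{k=1}^{n-1}(c_k-c_{k+1})(Y_k-X_k)+c_n(Y_n-X_n).
\end{equation*}
Condition 2 of Definition~\ref{def:majorization} gives $Y_n=X_n$, so the last term vanishes. Condition 1 gives $Y_k-X_k\ge 0$, and the first step gives $c_k-c_{k+1}\ge0$. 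Every remaining term is therefore non-negative, and so $\sum_i f(x_i)\le\sum_i f(y_i)$.

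Finally, we apply this result to $-f$ when $f$ is concave, which reverses the inequality and yields the second claim. This covers the case $f(x)=\log_2(1+\rho x)$ on $I=[0,\infty)$ that the paper uses.
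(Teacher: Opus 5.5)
Your proof is correct. It is the classical divided-difference (Abel summation) proof of Karamata's inequality, and each step holds: the reduction of the concave case to the convex one via $-f$, the ordering $c_1\ge c_2\ge\cdots\ge c_n$ from monotonicity of chord slopes, and the summation-by-parts identity with $Y_n=X_n$ removing the boundary term. The paper does not prove this statement; it quotes it from Marshall--Olkin, so there is no in-paper argument to match. The other standard route uses the Hardy--Littlewood--P\'olya characterization: $\mathbf{x}\prec\mathbf{y}$ means $\mathbf{x}=\mathbf{P}\mathbf{y}$ for some doubly stochastic $\mathbf{P}$. One then applies Jensen's inequality row by row, $f(x_i)\le\sum_j P_{ij}f(y_j)$, and sums over $i$ using that the columns of $\mathbf{P}$ sum to one. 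That proof is one line once the doubly stochastic representation is available, but the representation is itself the harder theorem. Your argument is self-contained and needs only the three-chord lemma.

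Two bookkeeping details are worth writing out. First, when you delete an index $i_0$ with $x_{i_0}=y_{i_0}$, state that the shortened vectors are still sorted and still satisfy both majorization conditions. This holds because every partial sum past $i_0$ loses the same amount on both sides, and it is what lets you apply the Abel-summation step to them. Second, when comparing $c_i$ with $c_{i+1}$, the intermediate chord over $\{x_{i+1},y_i\}$ is degenerate if $x_{i+1}=y_i$. In that case go through $\{x_i,y_{i+1}\}$ instead. Both cannot be degenerate, because $x_{i+1}=y_i$ and $x_i=y_{i+1}$ together force $x_i=x_{i+1}=y_i=y_{i+1}$, contradicting $x_i\neq y_i$.
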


%\subsection{Ky Fan $k$-Norm}
We then review the Ky Fan $k$-norm for the Hermitian Gram matrix $\mathbf{G}$, which is defined as the sum of the $k$ largest eigenvalues of $\mathbf{G}$.

\begin{define}[Ky Fan $k$-norm \cite{HornTopics}]
Let $\lambda_1(\mathbf{G}) \ge \lambda_2(\mathbf{G}) \ge \dots \ge \lambda_{N_{\mathrm{r}}}(\mathbf{G})$ denote the eigenvalues of the Hermitian Gram matrix $\mathbf{G} = \mathbf{H}\mathbf{H}^H \in \mathbb{C}^{N_{\mathrm{r}} \times N_{\mathrm{r}}}$. The Ky Fan $k$-norm of $\mathbf{G}$, denoted by $\|\mathbf{G}\|_{(k)}$, is defined as the sum of the $k$ largest eigenvalues, i.e.,
\begin{equation}
    \|\mathbf{G}\|_{(k)} = \sum_{i=1}^{k} \lambda_i(\mathbf{G})
    = \max_{\mathbf{V}^H \mathbf{V} = \mathbf{I}_k} 
    \operatorname{tr}(\mathbf{V}^H \mathbf{G} \mathbf{V}),
\end{equation}
where $\mathbf{V} \in \mathbb{C}^{N_{\mathrm{r}} \times k}$ is a matrix with orthonormal columns.
\end{define}

If we regard the Ky Fan $k$-norm as a function of the matrix $\mathbf{G}$, i.e., $f_k(\mathbf{G}) := \|\mathbf{G}\|_{(k)}$, then $f_k(\mathbf{G})$ is convex in $\mathbf{G}$. Therefore, by Jensen's inequality, we obtain
\begin{equation}
    \E[f_k(\mathbf{G})] 
    \ge f_k\!\left(\E[\mathbf{G}]\right).
    \label{eq:f_k_Jensen}
\end{equation}

%\subsubsection{Expected Gram Matrix for Restricted Angular Range}

To further develop the analytical tools, we analyze the expected Gram matrix over a restricted angular range $(-\omega_{\max}, \omega_{\max}]$.  
Consider the MIMO communication channel model in \eqref{eq:H_c_k}, whose
\begin{comment}
\begin{equation}
\mathbf{H} = \sum_{i=1}^{D_{\mathrm{c}}} \alpha_{\mathrm{c},i} \mathbf{a}_{\mathrm{r}}(\omega_{\mathrm{r},i}) \mathbf{a}_{\mathrm{t}}^H(\omega_{\mathrm{t},i})
\end{equation}
\end{comment}
Gram matrix $\mathbf{G} = \mathbf{H}_{\mathrm{c}}\mathbf{H}_{\mathrm{c}}^H$ is
\begin{IEEEeqnarray}{rCl}
\mathbf{G} &=& \sum_{i=1}^{D_{\mathrm{c}}} \alpha_{\mathrm{c},i} \mathbf{a}_{\mathrm{r}}(\omega_{\mathrm{r},i}) \mathbf{a}_{\mathrm{t}}^H(\omega_{\mathrm{t},i}) \sum_{j=1}^{D_{\mathrm{c}}} \alpha_{\mathrm{c},j}^* \mathbf{a}_{\mathrm{t}}(\omega_{\mathrm{t},j}) \mathbf{a}_{\mathrm{r}}^H(\omega_{\mathrm{r},j}). \IEEEeqnarraynumspace
\end{IEEEeqnarray}
We take the expectation over the channel paths, assuming they are uncorrelated with zero mean and variance $\sigma_{\alpha,\mathrm{c}}^2$, such that $\E[\alpha_{\mathrm{c},i} \alpha_{\mathrm{c},j}^*] = \sigma_{\alpha,\mathrm{c}}^2 \delta_{ij}$, which results in
\begin{subequations}
\label{eq:E_G}
\begin{align}%{rCl}
\E[\mathbf{G}] &= \sum_{i=1}^{D_{\mathrm{c}}} \sigma_{\alpha,\mathrm{c}}^2 \E_{\omega_{\mathrm{r},i}} \left[ \mathbf{a}_{\mathrm{r}}(\omega_{\mathrm{r},i}) \mathbf{a}_{\mathrm{t}}^H(\omega_{\mathrm{t},i}) \mathbf{a}_{\mathrm{t}}(\omega_{\mathrm{t},i}) \mathbf{a}_{\mathrm{r}}^H(\omega_{\mathrm{r},i}) \right] \nonumber \\ %\IEEEeqnarraynumspace \label{eq:E_G}
&=
\sum_{i=1}^{D_{\mathrm{c}}} \sigma_{\alpha,\mathrm{c}}^2 N_{\mathrm{t}} \E_{\omega_{\mathrm{r},i}} \left[ \mathbf{a}_{\mathrm{r}}(\omega_{\mathrm{r},i}) \mathbf{a}_{\mathrm{r}}^H(\omega_{\mathrm{r},i}) \right] \label{eq:E_G_a} \\
&=
D_{\mathrm{c}} \sigma_{\alpha,\mathrm{c}}^2 N_{\mathrm{t}} \E_{\omega} \left[ \mathbf{a}_{\mathrm{r}}(\omega) \mathbf{a}_{\mathrm{r}}^H(\omega) \right] \label{eq:E_G_b} \\
&=
D_{\mathrm{c}} \sigma_{\alpha,\mathrm{c}}^2 N_{\mathrm{t}} \mathbf{R}_{\mathrm{r}}, \label{eq:E_G_R}
\end{align}
\end{subequations}
where \eqref{eq:E_G_a} uses the fact that $\mathbf{a}_{\mathrm{t}}^H \mathbf{a}_{\mathrm{t}} = \|\mathbf{a}_{\mathrm{t}}\|^2 = N_{\mathrm{t}}$, \eqref{eq:E_G_b} assumes the spatial frequency $\omega_{\mathrm{r},i}$ follows the same distribution as $\omega$ for all paths $i$, and \eqref{eq:E_G_R} adopts the notation $\mathbf{R}_{\mathrm{r}} := \E_{\omega} \left[ \mathbf{a}_{\mathrm{r}}(\omega) \mathbf{a}_{\mathrm{r}}^H(\omega) \right]$.

%%%%
Let $f_{\Omega}(\omega)$ be the probability density function of the spatial frequency $\omega$. The $(k, m)$-th entry of $\mathbf{R}_{\mathrm{r}}$ is given by
% \begin{align}
%     R_{\mathrm{r},k,m} &= \E[e^{j\omega (n_{\mathrm{r},k} - n_{\mathrm{r},m})}] \nonumber \\%\label{eq:R_rkm} \\
%     &= \int_{-\pi}^{\pi} f_{\Omega}(\omega) e^{j\omega (n_{\mathrm{r},k} - n_{\mathrm{r},m})} d\omega \nonumber \\
%     &= 2\pi a_{n_{\mathrm{r},m} - n_{\mathrm{r},k}}, \label{eq:R_rkm}
% \end{align}
\begin{IEEEeqnarray}{rCl}
    R_{\mathrm{r},k,m} &=& \E[e^{j\omega (n_{\mathrm{r},k} - n_{\mathrm{r},m})}]
    = \int_{-\pi}^{\pi} f_{\Omega}(\omega) e^{j\omega (n_{\mathrm{r},k} - n_{\mathrm{r},m})} d\omega \IEEEnonumber \\
    &=& 2\pi a_{n_{\mathrm{r},m} - n_{\mathrm{r},k}}, \label{eq:R_rkm}
\end{IEEEeqnarray}
where $a_l$'s are the Fourier series coefficients of $f_{\Omega}(\omega)$. As established in classical Fourier analysis \cite{GrafakosFourier}, the decay rate of the Fourier coefficients $a_l$ is governed by the smoothness of the probability density function $f_{\Omega}(\omega)$. Specifically, if the $p$-th derivative of $f_{\Omega}(\omega)$ exists and is of bounded variation, $a_l$ decay as $O(|l|^{-(p+1)})$. As a special case with $p=0$, if $f_{\Omega}(\omega)$ is of bounded variation, $a_l$ decays as $O(|l|^{-1})$. Hence, under almost all practical channel conditions, represented by the distribution $f_{\Omega}(\omega)$ of the path angles, the off-diagonal entries $R_{\mathrm{r},k,m}$ diminishes as the antenna spacings $n_{\mathrm{r},m} - n_{\mathrm{r},k}$ increase.%, so the majorization criterion (\ref{eq:mjrz_criterion}) is likely to be satisfied.

We note that computing $R_{\mathrm{r},k,m}$ does not require Monte Carlo simulation if closed-form expressions for the Fourier series coefficients are available, or if the coefficients are evaluated using numerical integration methods like Riemann sum approximation. In what follows, we present two examples.

\begin{example}\label{ex:unif}
Consider that the spatial frequency $\omega$ is uniformly distributed over the interval $(-\omega_{\text{max}}, \omega_{\text{max}}]$, where $0 < \omega_{\text{max}} \leq \pi$, with probability density function
\begin{equation}
f_{\Omega}(\omega) =
\begin{cases}
\frac{1}{2\omega_{\text{max}}}, & |\omega| \leq \omega_{\text{max}},\\
0, & \text{otherwise}.
\end{cases}
\label{eq:PDF_omega}
\end{equation}
This assumption reflects the clustered scattering characteristic commonly observed in mmWave MIMO channels, where propagation paths tend to concentrate within a limited angular region around a dominant direction~\cite{Heath}. In this model, $\omega_{\text{max}}$ characterizes the angular spread of the cluster, and a smaller $\omega_{\text{max}}$ corresponds to a channel with more tightly concentrated paths in the angle domain. %Recall that $\mathbf{a}_{\mathrm{r}}(\omega) = [e^{j\omega n_{\mathrm{r},0}}, e^{j\omega n_{\mathrm{r},1}}, \dots, e^{j\omega n_{\mathrm{r},N_{\mathrm{r}}-1}}]^T$. 
Under the above assumption on $\omega$, we evaluate the $(k,m)$-th entry of the matrix $\E_{\omega}\!\left[\mathbf{a}_{\mathrm{r}}(\omega)\mathbf{a}_{\mathrm{r}}^H(\omega)\right]$, denoted by $R_{\mathrm{r},k,m}$, as follows:
\begin{IEEEeqnarray}{rCl}
    R_{\mathrm{r},k,m} &=& \E[e^{j\omega n_{\mathrm{r},k}} (e^{j\omega n_{\mathrm{r},m}})^*] = \E[e^{j\omega (n_{\mathrm{r},k} - n_{\mathrm{r},m})}] \IEEEnonumber \\
    &=&
    \frac{1}{2\omega_{\text{max}}} \int_{-\omega_{\text{max}}}^{\omega_{\text{max}}} e^{j\omega (n_{\mathrm{r},k} - n_{\mathrm{r},m})} d\omega. \label{eq:Rkm_pre}
\end{IEEEeqnarray}
Let $d_{\mathrm{r},k,m} = n_{\mathrm{r},k} - n_{\mathrm{r},m}$. If $k = m$, then $d_{\mathrm{r},k,m} = 0$ and $R_{\mathrm{r},k,k} = 1$. If $k \neq m$, we have
\begin{align}
    \eqref{eq:Rkm_pre}
    &=
    \frac{1}{2\omega_{\text{max}}} \left[ \frac{e^{j\omega d_{\mathrm{r},k,m}}}{j d_{\mathrm{r},k,m}} \right]_{-\omega_{\text{max}}}^{\omega_{\text{max}}} \nonumber \\
    &= 
    \frac{e^{j\omega_{\text{max}} d_{\mathrm{r},k,m}} - e^{-j\omega_{\text{max}} d_{\mathrm{r},k,m}}}{2j \omega_{\text{max}} d_{\mathrm{r},k,m}} =
    \frac{\sin(\omega_{\text{max}} d_{\mathrm{r},k,m})}{\omega_{\text{max}} d_{\mathrm{r},k,m}} \nonumber \\
    &= \text{sinc}(\omega_{\text{max}} (n_{\mathrm{r},k} - n_{\mathrm{r},m})/\pi), \label{eq:Rkm}
\end{align}
where we use Euler's identity $\sin(x) = \frac{e^{jx} - e^{-jx}}{2j}$ and the definition of sinc function. Hence, $\mathbf{R}_{\mathrm{r}}$ (and thus $\E[\mathbf{G}]$) is completely characterized.
\end{example}
\begin{example}\label{ex:gaussian}
    Consider a channel with a Gaussian-shaped angular density function $f_{\Omega}(\omega)$. Since $f_{\Omega}(\omega)$ is infinitely differentiable, its Fourier coefficients $a_l$ decay exponentially fast.
\end{example}

Several remarks about $\E[\mathbf{G}]$ are worth noting.
\begin{remark}[Trace conservation]
For any $0<\omega_{\text{max}}\leq \pi$, we have
    \begin{IEEEeqnarray}{rCl}
        \text{tr}(\E[\mathbf{G}]) &=& f_{N_{\mathrm{r}}}(\E[\mathbf{G}]) = \sum_{i=1}^{N_{\mathrm{r}}} \lambda_i(\E[\mathbf{G}]) = D_{\mathrm{c}} \sigma_{\alpha,\mathrm{c}}^2 N_{\mathrm{t}} N_{\mathrm{r}} \IEEEeqnarraynumspace \label{eq:trace_E_G}
    \end{IEEEeqnarray}
    for any array.
\end{remark}

\begin{remark}[Whole-range uniform distribution of angles]
    According to Example~\ref{ex:unif}, if $\omega_{\text{max}}=\pi$, \eqref{eq:Rkm} becomes sinc$(n_{\mathrm{r},k} - n_{\mathrm{r},m})$ and we obtain
    \begin{equation}
        \E[\mathbf{G}] = (D_{\mathrm{c}} \sigma_{\alpha,\mathrm{c}}^2 N_{\mathrm{t}}) \mathbf{I}_{N_{\mathrm{r}}} \label{eq:whole_unif}
    \end{equation}
    for any array.
\end{remark}

\subsection{Proposed Analytical Framework: SA vs. ULA} \label{ssc:ULA_mjrz_SA}
In this subsection, we present the proposed analytical framework for comparing a SA and a ULA based on the properties of their respective Gram matrices. Let $\mathbf{G}_{\mathrm{SA}}$ and $\mathbf{G}_{\mathrm{ULA}}$ denote the Gram matrices corresponding to a SA and a ULA, respectively, and let $\boldsymbol{\lambda}_{\mathrm{SA}}$ and $\boldsymbol{\lambda}_{\mathrm{ULA}}$ denote their eigenvalue vectors. If one could establish that
$\boldsymbol{\lambda}_{\mathrm{SA}} \prec \boldsymbol{\lambda}_{\mathrm{ULA}}$ for all realizations of the channel,
then, by Karamata's inequality in Theorem~\ref{thm:Kara} together with the concavity of the capacity function, it would follow that
$C(\boldsymbol{\lambda}_{\mathrm{SA}}) \ge C(\boldsymbol{\lambda}_{\mathrm{ULA}}),$
and hence
$\E\!\left[C(\boldsymbol{\lambda}_{\mathrm{SA}})\right] \ge \E\!\left[C(\boldsymbol{\lambda}_{\mathrm{ULA}})\right].$

However, proving
%\[
$\boldsymbol{\lambda}_{\mathrm{SA}} \prec \boldsymbol{\lambda}_{\mathrm{ULA}}$
%\]
is generally very difficult, and in many cases it is in fact false, since majorization is a rather stringent ordering. Motivated by this observation, we instead consider the following more tractable surrogate criterion:
\begin{framed}
\begin{equation}
\boldsymbol{\lambda}\!\bigl(\E[\mathbf{G}_{\mathrm{SA}}]\bigr)
\prec
\boldsymbol{\lambda}\!\bigl(\E[\mathbf{G}_{\mathrm{ULA}}]\bigr).
\label{eq:mjrz_criterion_G}
\end{equation}
\end{framed}
We note that, given \eqref{eq:E_G}, the criterion \eqref{eq:mjrz_criterion_G} is equivalent to
\begin{framed}
    \begin{equation}
    \boldsymbol{\lambda}(\mathbf{R}_{\text{r,SA}}) \prec \boldsymbol{\lambda}(\mathbf{R}_{\text{r,ULA}}). \label{eq:mjrz_criterion}
    \end{equation}
\end{framed}

In the sequel, we provide the rationale for adopting this criterion.
\begin{enumerate}
    \item If we have $\boldsymbol{\lambda}(\E[\mathbf{G}_{\text{SA}}]) \prec \boldsymbol{\lambda}(\E[\mathbf{G}_{\text{ULA}}])$ in \eqref{eq:mjrz_criterion_G}, then it is \textbf{likely} that
    \begin{equation}
        \E[\boldsymbol{\lambda}(\mathbf{G}_{\mathrm{SA}})] \prec \E[\boldsymbol{\lambda}(\mathbf{G}_{\mathrm{ULA}})],
        \label{eq:avg_mjrz}
    \end{equation}
    provided that the Jensen gaps of SA and ULA in
    \begin{equation}
        \E[f_k(\mathbf{G}_{\mathrm{SA}})] \geq f_k(\E[\mathbf{G}_{\mathrm{SA}}]),
    \end{equation}
    and
    \begin{equation}
        \E[f_k(\mathbf{G}_{\mathrm{ULA}})] \geq f_k(\E[\mathbf{G}_{\mathrm{ULA}}]), 
    \end{equation}
    $\forall k=1, \dots, N_{\mathrm{r}}-1$, do not differ too much.
    \item If we have $\E[\boldsymbol{\lambda}(\mathbf{G}_{\mathrm{SA}})] \prec \E[\boldsymbol{\lambda}(\mathbf{G}_{\mathrm{ULA}})]$ in \eqref{eq:avg_mjrz}, then since $\Phi(\boldsymbol{\lambda}) := \sum_i \log_2(1 + \rho \lambda_i)$ is composed of the concave function $\log_2(1 + \rho x)$, we can use the Karamata's inequality in Theorem~\ref{thm:Kara} to obtain
    \begin{equation}
    \Phi(\E[\boldsymbol{\lambda}_{\mathrm{SA}}]) \geq \Phi(\E[\boldsymbol{\lambda}_{\mathrm{ULA}}]).
    \label{eq:phi_ineq}
    \end{equation}
%Let $\bar{\boldsymbol{\lambda}}_{\mathrm{SA}} = \E[\boldsymbol{\lambda}_{\mathrm{SA}}] = \E[\boldsymbol{\lambda}(\mathbf{G}_{\mathrm{SA}})]$ and $\bar{\boldsymbol{\lambda}}_{\mathrm{ULA}} = \E[\boldsymbol{\lambda}_{\mathrm{ULA}}] = \E[\boldsymbol{\lambda}(\mathbf{G}_{\mathrm{ULA}})]$.   
    \item If we have $\Phi(\E[\boldsymbol{\lambda}_{\mathrm{SA}}]) \geq \Phi(\E[\boldsymbol{\lambda}_{\mathrm{ULA}}])$ in \eqref{eq:phi_ineq}, it is \textbf{likely} that we have the following relationship between the ergodic capacities of the SA and ULA,
    \begin{align}
        \E[C(\boldsymbol{\lambda}_{\mathrm{SA}})]
        &= \E[\Phi(\boldsymbol{\lambda}_{\mathrm{SA}})] \nonumber \\
        &\geq \E[\Phi(\boldsymbol{\lambda}_{\mathrm{ULA}})]
        = \E[C(\boldsymbol{\lambda}_{\mathrm{ULA}})], \label{eq:ergCapIneq}
    \end{align}
    provided that the Jensen gaps of SA and ULA in
    \begin{equation}
        \E[\Phi(\boldsymbol{\lambda}_{\mathrm{SA}})] \leq \Phi(\E[\boldsymbol{\lambda}_{\mathrm{SA}}]),
    \end{equation}
    and
    \begin{equation}
         \E[\Phi(\boldsymbol{\lambda}_{\mathrm{ULA}})] \leq \Phi(\E[\boldsymbol{\lambda}_{\mathrm{ULA}}]),
    \end{equation}
    do not differ too much.
\end{enumerate}
Note that we can only conclude that (\ref{eq:ergCapIneq}) is likely (but not definitely) true due to the Jensen gaps. However, simulations in Sec. \ref{sec:sim} show that (\ref{eq:ergCapIneq}) is indeed established in a wide range of channel conditions.

    Let us revisit the examples discussed above. 
    %\addtocounter{example}{-2}
    \begin{example}[Uniform revisited]
        For the case where each DOA $\omega_{\mathrm{r},i}$ is uniformly distributed in $(-\omega_{\max},\omega_{\max}]$, we know from Example~\ref{ex:unif} that
        % $R_{\mathrm{r,SA},k,m} = \mathrm{sinc}\!\left(\frac{\omega_{\max} d_{\mathrm{r,SA},k,m}}{\pi}\right),$ and $R_{\mathrm{r,ULA},k,m} = \mathrm{sinc}\!\left(\frac{\omega_{\max} d_{\mathrm{r,ULA},k,m}}{\pi}\right).$
        $R_{\mathrm{r,SA},k,m} = \mathrm{sinc}(\omega_{\max} d_{\mathrm{r,SA},k,m} / \pi)$ and $R_{\mathrm{r,ULA},k,m} = \mathrm{sinc}(\omega_{\max} d_{\mathrm{r,ULA},k,m} / \pi)$.

    For the special case $\omega_{\max}=\pi$, we have $\boldsymbol{\lambda}(\mathbf{R}_{\mathrm{r,SA}})=
    \boldsymbol{\lambda}(\mathbf{R}_{\mathrm{r,ULA}}),$ which trivially satisfies $\boldsymbol{\lambda} (\mathbf{R}_{\mathrm{r,SA}}) \prec \boldsymbol{\lambda}(\mathbf{R}_{\mathrm{r,ULA}}).$ For the general case $0<\omega_{\max}\le \pi$, recall that a vector $\mathbf{x}$ is majorized by $\mathbf{y}$ ($\mathbf{x}\prec\mathbf{y}$) if $\mathbf{x}$ is ``less spread out'' than $\mathbf{y}$. For the eigenvalue vector $\boldsymbol{\lambda}$ of an $N_{\mathrm r}\times N_{\mathrm r}$ positive semi-definite matrix $\mathbf{R}_{\mathrm r}$ with a fixed trace $\mathrm{tr}(\mathbf{R}_{\mathrm r})=N_{\mathrm r}$, the identity matrix $\mathbf{I}$ (whose eigenvalues satisfy $\lambda_i=1$ for all $i$) corresponds to the lower bound in the majorization order, representing the most uniform eigenvalue distribution.
    In a SA, the antenna indices $n_{\mathrm r,k}$ are chosen such that the differences $|n_{\mathrm r,k}-n_{\mathrm r,m}|$ are generally larger than those in a ULA for $k\neq m$. Since the sinc function typically decreases in magnitude as its argument increases, we have
\begin{align}
|R_{\mathrm{r,SA},k,m}|
&=
\left|
\mathrm{sinc}\!\left(\frac{\omega_{\max} d_{\mathrm{r,SA},k,m}}{\pi}\right)
\right| \nonumber\\
&\le
\left|
\mathrm{sinc}\!\left(\frac{\omega_{\max} d_{\mathrm{r,ULA},k,m}}{\pi}\right)
\right|
=
|R_{\mathrm{r,ULA},k,m}|, \nonumber
\end{align}
for most off-diagonal pairs.

Consequently, the off-diagonal entries of $\mathbf{R}_{\mathrm{r,SA}}$ tend to be smaller in magnitude than those of $\mathbf{R}_{\mathrm{r,ULA}}$. As the off-diagonal entries decrease, the matrix becomes more diagonally dominant, which tends to push its eigenvalues toward the uniform vector $[1\; 1\; \cdots\; 1]^T$. In contrast, the ULA has the smallest aperture for a fixed $N_{\mathrm r}$, resulting in larger off-diagonal terms and hence a more uneven (or ``peaked'') eigenvalue distribution. Therefore, it is reasonable to expect that the majorization condition in~\eqref{eq:mjrz_criterion} holds.
\end{example}

\begin{comment}
    \begin{equation}
        f_k(\E[\mathbf{G}_{\mathrm{SA}}]) \le f_k(\E[\mathbf{G}_{\mathrm{ULA}}]), \quad \forall k=1, \dots, N_{\mathrm{r}}-1
    \end{equation}
    and
    \begin{equation}
        \boldsymbol{\lambda}(\E[\mathbf{G}_{\text{SA}}]) \prec \boldsymbol{\lambda}(\E[\mathbf{G}_{\text{ULA}}])
    \end{equation}
\end{comment}
    
    \begin{example}[Gaussian revisited]
        As stated in Example~\ref{ex:gaussian}, for a Gaussian-shaped angular density $f_{\Omega}(\omega)$, the Fourier coefficients $a_\ell$ decay exponentially fast with $|\ell|$. Since the entries of $\mathbf{R}_{\mathrm r}$ are given by $R_{\mathrm r,k,m}=a_{n_{\mathrm r,k}-n_{\mathrm r,m}}$, larger antenna separations lead to much smaller off-diagonal correlations. Because SAs typically induce larger pairwise separations than a ULA, the off-diagonal entries of $\mathbf{R}_{\mathrm{r,SA}}$ tend to decay more rapidly, which in turn tends to reduce $\|\mathbf{R}_{\mathrm{r,SA}}\|_F^2$ relative to $\|\mathbf{R}_{\mathrm{r,ULA}}\|_F^2$. Under the common trace constraint, this means that the eigenvalue vector of $\mathbf{R}_{\mathrm{r,SA}}$ is expected to be closer to the uniform vector than that of $\mathbf{R}_{\mathrm{r,ULA}}$. Therefore, the majorization condition in \eqref{eq:mjrz_criterion} becomes even more plausible for Gaussian-shaped angular spectra, although a formal proof generally remains difficult.
    \end{example} 

\subsection{Relationship With Effective DoF and Beam Pattern}\label{ssc:edof}
We now connect the proposed analytical framework with the {\it effective degree of freedom (EDoF)}, which is defined as follows.

\begin{define}[EDoF \cite{WangFeng,Muharemovic}]
For a fixed MIMO channel $\mathbf{H}_{\mathrm{c}}$, the EDoF is defined as
\begin{align}
    \epsilon 
    := \frac{\left[ \text{tr}(\mathbf{H}_{\mathrm{c}}\mathbf{H}_{\mathrm{c}}^H) \right]^2}{\|\mathbf{H}_{\mathrm{c}}\mathbf{H}_{\mathrm{c}}^H\|_F^2}  
    = \frac{\left[\text{tr}(\mathbf{G})\right]^2}{\text{tr}(\mathbf{G}^2)}  
    = \frac{\left[\sum_{i=1}^{D_{\mathrm{c}}} \lambda_i(\mathbf{G})\right]^2}{\sum_{i=1}^{D_{\mathrm{c}}} \lambda_i^2(\mathbf{G})}.
\end{align}
%where $\mathbf{G}=\mathbf{H}_{\mathrm{c}}\mathbf{H}_{\mathrm{c}}^H$ and $\lambda_i(\mathbf{G})$ denotes the eigenvalues of $\mathbf{G}$.
\end{define}
Since the channel matrix $\mathbf{H}_{\mathrm{c}}$ is random, the EDoF defined above is also random and therefore not directly suitable as a performance metric for evaluating different array geometries. We therefore consider the following deterministic surrogate obtained by averaging.

\begin{define}
The average surrogate of the EDoF is defined as
\begin{align}
\epsilon_{\mathrm{avg}} 
:=
\frac{\left[\E\!\left[\sum_{i=1}^{D_{\mathrm{c}}} \lambda_i(\mathbf{G})\right]\right]^2}
{\E\!\left[\sum_{i=1}^{D_{\mathrm{c}}} \lambda_i^2(\mathbf{G})\right]}. \label{eq:EDoFavg}
\end{align}
\end{define}

\begin{remark}
Using Cauchy-Schwarz inequality and the fact that each $\lambda_i(\mathbf{G}) \geq 0$, we can obtain $1 \leq \epsilon \leq D_{\mathrm{c}}$.
Moreover, if the eigenvalues are more ``peaked" or spread-out, the sum of squared eigenvalues gets larger, and $\epsilon$ gets smaller.
If the eigenvalues are closer to a uniform vector, the sum of squared eigenvalues gets smaller, and $\epsilon$ gets larger.
In the two extreme cases, $\epsilon = 1$ if there is only one nonzero eigenvalue, and $\epsilon = D_{\mathrm{c}}$ if all the eigenvalues $\lambda_i(\mathbf{G})$, $i = 1, \dots, D_{\mathrm{c}}$ are equal.
\end{remark}

\begin{remark}
    The surrogate $\epsilon_{\mathrm{avg}}$ replaces the ratio of random quantities by the ratio of their expectations to obtain a deterministic and analytically tractable metric. Although $\E[X/Y] \neq \E[X]/\E[Y]$ in general, this approximation is justified when the involved quantities exhibit concentration (e.g., due to channel hardening \cite{Hochwald} or in high-dimensional systems), under which the ratio becomes tightly concentrated around the ratio of expectations. Furthermore, $\epsilon_{\mathrm{avg}}$ depends only on the first- and second-order spectral moments of $\mathbf{G}$, thereby retaining the essential information about the eigenvalue spread that governs the effective degrees of freedom.
\end{remark}

We establish the following closed-form expression for $\epsilon_{\mathrm{avg}}$.

\begin{theorem}\label{thm:EDoF}
For the MIMO channel $\mathbf{H}_{\mathrm{c}}$ in \eqref{eq:H_c_k}, the average surrogate EDoF is given by
\begin{equation}
\epsilon_{\mathrm{avg}}
=
\frac{D_{\mathrm{c}}}
{2 + (D_{\mathrm{c}} - 1)
\left[
\frac{\|\mathbf{R}_{\mathrm{r}}\|_F^2}{N_{\mathrm{r}}^2}
+
\frac{\|\mathbf{R}_{\mathrm{t}}\|_F^2}{N_{\mathrm{t}}^2}
\right]} \label{eq:EDoF_avg}
\end{equation}
and $\|\mathbf{R}_q\|_F^2 = \E [ | \mathbf{a}_q^H(\omega_{q,i}) \mathbf{a}_q(\omega_{q,j}) |^2 ]$, $q \in \{\mathrm{t},\mathrm{r}\}$.
\end{theorem}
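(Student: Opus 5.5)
The proof consists of computing the numerator and the denominator of \eqref{eq:EDoFavg} in closed form and dividing. Both $\sum_i \lambda_i(\mathbf{G}) = \operatorname{tr}(\mathbf{G})$ and $\sum_i \lambda_i^2(\mathbf{G}) = \operatorname{tr}(\mathbf{G}^2)$ are polynomials in the path gains and steering vectors. No eigen-decomposition is needed, only moments.

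The model assumptions stated so far do not determine the answer, because the denominator involves the fourth moment $\E[|\alpha_{\mathrm{c},i}|^4]$. The constant $2$ in \eqref{eq:EDoF_avg} corresponds exactly to $\E[|\alpha_{\mathrm{c},i}|^4] = 2\sigma_{\alpha,\mathrm{c}}^4$. I will therefore add the following assumptions:
\begin{itemize}
\item the gains $\alpha_{\mathrm{c},i}$ are i.i.d. zero-mean circularly-symmetric complex Gaussian;
\item the gains are independent of the angles;
\item the spatial frequencies are independent across paths, and $\omega_{\mathrm{t},i}$ is independent of $\omega_{\mathrm{r},i}$;
\item $\mathbf{R}_{\mathrm{t}} := \E[\mathbf{a}_{\mathrm{t}}(\omega)\mathbf{a}_{\mathrm{t}}^H(\omega)]$ is defined analogously to $\mathbf{R}_{\mathrm{r}}$.
\end{itemize}

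\textbf{Numerator.} By \eqref{eq:trace_E_G}, $\E[\operatorname{tr}\mathbf{G}] = D_{\mathrm{c}}\sigma_{\alpha,\mathrm{c}}^2 N_{\mathrm{t}} N_{\mathrm{r}}$. Squaring gives $D_{\mathrm{c}}^2\sigma_{\alpha,\mathrm{c}}^4 N_{\mathrm{t}}^2 N_{\mathrm{r}}^2$.

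\textbf{Denominator: expansion.} Write $\mathbf{a}_i := \mathbf{a}_{\mathrm{r}}(\omega_{\mathrm{r},i})$ and $\mathbf{b}_i := \mathbf{a}_{\mathrm{t}}(\omega_{\mathrm{t},i})$. Then $\mathbf{G} = \sum_{i,j}\alpha_i\alpha_j^*(\mathbf{b}_i^H\mathbf{b}_j)\,\mathbf{a}_i\mathbf{a}_j^H$, and
\begin{equation*}
\operatorname{tr}(\mathbf{G}^2) = \sum_{i,j,k,l}\alpha_i\alpha_j^*\alpha_k\alpha_l^*\,(\mathbf{b}_i^H\mathbf{b}_j)(\mathbf{a}_j^H\mathbf{a}_k)(\mathbf{b}_k^H\mathbf{b}_l)(\mathbf{a}_l^H\mathbf{a}_i).
\end{equation*}

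\textbf{Denominator: Isserlis reduction.} Conditioning on the angles and applying Isserlis' theorem for circular Gaussians, only two pairings survive: $(i=j,\,k=l)$ and $(i=l,\,j=k)$. I will split the sum into three disjoint cases.
\begin{itemize}
\item $i=j=k=l$: each term contributes $\E|\alpha|^4\,N_{\mathrm{t}}^2N_{\mathrm{r}}^2 = 2\sigma_{\alpha,\mathrm{c}}^4 N_{\mathrm{t}}^2N_{\mathrm{r}}^2$. Summed over $i$, this gives $2D_{\mathrm{c}}\sigma_{\alpha,\mathrm{c}}^4N_{\mathrm{t}}^2N_{\mathrm{r}}^2$.
\item $i=j\neq k=l$: each term contributes $\sigma_{\alpha,\mathrm{c}}^4N_{\mathrm{t}}^2\,|\mathbf{a}_i^H\mathbf{a}_k|^2$.
\item $i=l\neq j=k$: each term contributes $\sigma_{\alpha,\mathrm{c}}^4N_{\mathrm{r}}^2\,|\mathbf{b}_i^H\mathbf{b}_j|^2$.
\end{itemize}
Each of the last two cases has $D_{\mathrm{c}}(D_{\mathrm{c}}-1)$ ordered pairs.

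\textbf{Denominator: averaging over angles.} This step also proves the identity stated in the theorem. For independent $\omega_{q,i}$ and $\omega_{q,j}$,
\begin{equation*}
\E\bigl[|\mathbf{a}_q^H(\omega_{q,i})\mathbf{a}_q(\omega_{q,j})|^2\bigr] = \E\bigl[\operatorname{tr}(\mathbf{a}_{q,i}\mathbf{a}_{q,i}^H\,\mathbf{a}_{q,j}\mathbf{a}_{q,j}^H)\bigr] = \operatorname{tr}(\mathbf{R}_q\mathbf{R}_q) = \|\mathbf{R}_q\|_F^2 .
\end{equation*}
The last equality uses that $\mathbf{R}_q$ is Hermitian. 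Collecting the three cases,
\begin{equation*}
\E[\operatorname{tr}\mathbf{G}^2] = \sigma_{\alpha,\mathrm{c}}^4\Bigl[2D_{\mathrm{c}}N_{\mathrm{t}}^2N_{\mathrm{r}}^2 + D_{\mathrm{c}}(D_{\mathrm{c}}-1)\bigl(N_{\mathrm{t}}^2\|\mathbf{R}_{\mathrm{r}}\|_F^2 + N_{\mathrm{r}}^2\|\mathbf{R}_{\mathrm{t}}\|_F^2\bigr)\Bigr].
\end{equation*}

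\textbf{Conclusion.} Dividing the squared numerator by this expression and cancelling $D_{\mathrm{c}}\sigma_{\alpha,\mathrm{c}}^4N_{\mathrm{t}}^2N_{\mathrm{r}}^2$ gives \eqref{eq:EDoF_avg}.

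\textbf{Main obstacle.} The delicate step is the fourth-moment bookkeeping. One must classify the index coincidences correctly, so that the $i=j=k=l$ term is not double counted between the two pairings. One must also state the Gaussian assumption explicitly, since the factor $2$ depends on it. A non-Gaussian gain would instead give $\E|\alpha|^4/\sigma^4$ in place of $2$, and I would mention this in a remark.
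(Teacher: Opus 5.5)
Your proof is correct and follows essentially the same route as the paper's Appendix: expand $\operatorname{tr}(\mathbf{G}^2)$ into a quadruple sum, use the CSCG fourth moments to get the three index cases, then average over i.i.d. angles to get $\|\mathbf{R}_q\|_F^2$. The only differences are that you do the last step via $\operatorname{tr}(\mathbf{R}_q\mathbf{R}_q)$ rather than an explicit double sum over antenna indices, and that your assumption of independence between $\omega_{\mathrm{t},i}$ and $\omega_{\mathrm{r},i}$ is unnecessary, since each cross term depends on angles from only one side.
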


\begin{IEEEproof}
See Appendix \ref{adx:avgEDoF}.
\end{IEEEproof}

From Theorem~\ref{thm:EDoF}, we observe that maximizing $\epsilon_{\mathrm{avg}}$ amounts to minimizing $\|\mathbf{R}_q\|_F^2 = \E [ | \mathbf{a}_q^H(\omega_{q,i}) \mathbf{a}_q(\omega_{q,j}) |^2 ]$, the expected values of the \textit{beam patterns} of the arrays. This observation is closely related to the majorization criterion in~\eqref{eq:mjrz_criterion}. As discussed in Sec.~\ref{ssc:ULA_mjrz_SA}, the condition in~\eqref{eq:mjrz_criterion} is likely to hold because the off-diagonal entries of $\mathbf{R}_{\mathrm{r,SA}}$ are generally smaller than those of $\mathbf{R}_{\mathrm{r,ULA}}$, while both matrices have the same trace. Smaller off-diagonal entries lead to a smaller Frobenius norm $\|\mathbf{R}_{\mathrm{r}}\|_F^2$. Consequently, $\|\mathbf{R}_{\mathrm{r}}\|_F^2$, or equivalently the surrogate EDoF $\epsilon_{\mathrm{avg}}$, can serve as a quantifiable proxy for the majorization criterion.

While the majorization condition in~\eqref{eq:mjrz_criterion} provides the theoretical ``on-off'' condition for capacity improvement, the quantity $\|\mathbf{R}_{\mathrm{r}}\|_F^2$ offers a practical metric that enables quantitative comparison between different array geometries. In particular, comparing $\|\mathbf{R}_{\mathrm{r}}\|_F^2$ between a SA and a ULA allows us to assess how well the condition in~\eqref{eq:mjrz_criterion} is satisfied and to estimate the potential capacity gain.

\section{Proposed Angle CRB Analysis} \label{sec:CRB}

In this section, we analyze the angle CRB for the MIMO ISAC system while considering the array architecture. To this end, we will first provide the standard DOA estimation framework for the MIMO ISAC system. Then, the asymptotic angle CRB will be derived.

\subsection{Standard DOA Estimation Framework for the MIMO ISAC System}

The received sensing signal \eqref{eq:y_s_k} can be rewritten as
\begin{IEEEeqnarray}{rCl}
    \mathbf{y}_{\mathrm{s}}[k] &=& \sum_{i=1}^{D_{\mathrm{s}}} \mathbf{a}_{\mathrm{s}}(\omega_{\mathrm{s},i}) s_i[k] + \mathbf{w}_{\mathrm{s}}[k]
    = \mathbf{A}_{\mathrm{s}}(\boldsymbol{\omega}_{\mathrm{s}}) \mathbf{s}[k] + \mathbf{w}_{\mathrm{s}}[k], \IEEEeqnarraynumspace
    \label{eq:sensing_model_DOA}    
\end{IEEEeqnarray}
where $s_i[k] = \alpha_{\mathrm{s},i}[k] \mathbf{a}_{\mathrm{t}}^H(\omega_{\mathrm{s},i}) \mathbf{x}[k]$
% \begin{equation}
%     s_i[k] = \alpha_{\mathrm{s},i}[k] \mathbf{a}_{\mathrm{t}}^H(\omega_{\mathrm{s},i}) \mathbf{x}[k]
% \end{equation}
represents the complex amplitude of the $i$-th target at the $k$-th snapshot, $\mathbf{s}[k] = [s_1[k]\; \cdots\; s_{D_{\mathrm{s}}}[k]]^T$,
% is the corresponding signal vector
and
$\mathbf{A}_{\mathrm{s}}(\boldsymbol{\omega}_{\mathrm{s}}) = [\mathbf{a}_{\mathrm{s}}(\omega_{\mathrm{s},1})\; \cdots\; \mathbf{a}_{\mathrm{s}}(\omega_{\mathrm{s},D_{\mathrm{s}}})]$ is the array manifold matrix.
The sensing signal model in \eqref{eq:sensing_model_DOA} follows the standard DOA estimation framework (e.g., (2.1) of \cite{StoicaStochasticCRB}), where $s_i[k]$ is viewed as the point source at snapshot $k$, but one major difference is that unlike in \cite{StoicaStochasticCRB}, $\mathbf{s}[k]$ here depends on the angle parameters $\boldsymbol{\omega}_{\mathrm{s}} = [\omega_1\; \cdots\; \omega_{D_\mathrm{s}}]^T$.

Subsequently, by using the \textit{conditional model} for CRB from \cite{LiuCRB,StoicaStochasticCRB}, the derivation of angle CRB conditioning on the transmitted symbols $\mathbf{x}[k]$ is in general related to finding the empirical source covariance matrix $\hat{\mathbf{R}}_{\mathrm{s}}$ defined as
\begin{equation}
    \hat{\mathbf{R}}_{\mathrm{s}} = \frac{1}{K} \sum\nolimits_{k=1}^{K} \mathbf{s}[k]\mathbf{s}^H[k]\in \mathbb{C}^{D_{\mathrm{s}} \times D_{\mathrm{s}}}. \label{eq:R_s}
\end{equation}
The $(i, j)$-th entry of $\hat{\mathbf{R}}_{\mathrm{s}}$ represents the correlation between the reflected signals of target $i$ and target $j$ and is
\begin{IEEEeqnarray}{rCl}
    [\hat{\mathbf{R}}_{\mathrm{s}}]_{i,j} &=& \mathbf{a}_{\mathrm{t}}^H(\omega_{\mathrm{s},i}) \left[ \frac{1}{K} \sum_{k=1}^{K} \alpha_{\mathrm{s},i}[k] \alpha_{\mathrm{s},j}^*[k] \mathbf{x}[k] \mathbf{x}^H[k] \right] \mathbf{a}_{\mathrm{t}}(\omega_{\mathrm{s},j}). \IEEEnonumber
\end{IEEEeqnarray}
We assume that the transmitted symbols $\mathbf{x}[k]$ are realizations of i.i.d. random variables with zero mean and $E[\mathbf{x}[k] \mathbf{x}^H[k]] = \frac{P}{N_{\mathrm{t}}} \mathbf{I}_{N_{\mathrm{t}}}$.\footnote{In this case, system basically adopts the simple equal-power transmission, aligned with the consideration of the capacity expression in \eqref{eq:capacity_def}.} By following the law of large numbers with $K \to \infty$, the sample mean converges to the expected value,\footnote{Indeed, when the communication symbols are drawn from a balanced constellation (like QAM or PSK), it converges quickly to the expected value.} leading to
\begin{IEEEeqnarray}{rCl}
     [\hat{\mathbf{R}}_{\mathrm{s}}]_{i,j} &\stackrel{K \to \infty}{=}& \mathbf{a}_{\mathrm{t}}^H(\omega_{\mathrm{s},i}) \E\left[ \alpha_{\mathrm{s},i}[k] \alpha_{\mathrm{s},j}^*[k] \mathbf{x}[k] \mathbf{x}^H[k] \right] \mathbf{a}_{\mathrm{t}}(\omega_{\mathrm{s},j}) \IEEEnonumber \\
     &=& \sigma_{\alpha,\mathrm{s}}^2 \delta_{ij} P.
\end{IEEEeqnarray}
As a result, we can conclude that for large $K$,
\begin{equation}
    \hat{\mathbf{R}}_{\mathrm{s}} \approx \sigma_{\alpha,\mathrm{s}}^2 P \mathbf{I}_{D_{\mathrm{s}}}, \label{eq:R_s_approx}
\end{equation}
where the approximation becomes exact almost surely as $K\to\infty$.
Simulations in Sec. \ref{sec:sim} show that (\ref{eq:R_s_approx}) is highly accurate even for a moderate number of snapshots, such as $K=32$.

\subsection{Asymptotic Analysis for Angle CRB}

By casting the received sensing signal into the standard DOA estimation framework \eqref{eq:sensing_model_DOA}, we can derive the \textit{conditional CRB} for the $i$-th angle estimate $\omega_{\mathrm{s},i}$ with a procedure similar to those in \cite{StoicaStochasticCRB,LiuCRB}.
Namely, we derive the Fisher information matrix for the unknown parameters $[\bm{\omega}_{\mathrm{s}}^T\; \text{Re}\{\bm{\alpha}^T_{\mathrm{s}}[k]\}\; \text{Im}\{\bm{\alpha}^T_{\mathrm{s}}[k]\}]^T$, where $\bm{\alpha}_{\mathrm{s}}[k] = [\alpha_{\mathrm{s},1}[k]\; \cdots\; \alpha_{\mathrm{s},D_\mathrm{s}}[k]]^T$, and take the Schur complement of the block corresponding to $\bm{\omega}_{\mathrm{s}}$.
Then, by the product rule for derivatives, we obtain the conditional CRB for the DOA parameters
\begin{equation}
    \text{CRB}(\boldsymbol{\omega}_{\mathrm{s}}) = \frac{\sigma^2}{2K} \left[ \text{Re} \left\{ \frac{1}{K}\sum_{k=1}^{K} \mathbf{C}^H[k] \mathbf{P}_{\mathbf{A}_{\mathrm{s}}}^{\perp} \mathbf{C}[k] \right\} \right]^{-1},
\end{equation}
where $\sigma^2$ is the noise power, $\mathbf{P}_{\mathbf{A}_{\mathrm{s}}}^{\perp} = \mathbf{I} - \mathbf{A}_{\mathrm{s}}(\mathbf{A}_{\mathrm{s}}^H \mathbf{A}_{\mathrm{s}})^{-1} \mathbf{A}_{\mathrm{s}}^H$ is the orthogonal projection matrix of $\mathbf{A}_{\mathrm{s}}$, $\mathbf{C}[k] = \mathbf{D}_{\mathrm{s}}\mathbf{S}[k] + \mathbf{A}_{\mathrm{s}} \mathbf{S}'[k]$, $\mathbf{S}[k]$ and $\mathbf{S}'[k]$ are diagonal matrices with $[\mathbf{S}[k]]_{ii} = s_{i}[k]$ and $[\mathbf{S}'[k]]_{ii} = \frac{\partial}{\partial \omega_{s,i}} s_{i}[k]$, and the matrix $\mathbf{D}_{\mathrm{s}} \in \mathbb{C}^{N_{\mathrm{s}} \times D_{\mathrm{s}}}$ contains the partial derivatives of the receive steering vectors with respect to the DOA parameters $\boldsymbol{\omega}_{\mathrm{s}}$, given by $\mathbf{D}_{\mathrm{s}} = [\mathbf{d}_{\mathrm{s}}(\omega_{\mathrm{s},1})\; \cdots\; \mathbf{d}_{\mathrm{s}}(\omega_{\mathrm{s},D_{\mathrm{s}}})]$, where $\mathbf{d}_{\mathrm{s}}(\omega_{\mathrm{s},i})$ is the derivative of the $i$-th steering vector, namely, $\mathbf{d}_{\mathrm{s}}(\omega_{\mathrm{s},i}) = \frac{\partial \mathbf{a}_{\mathrm{s}}(\omega)}{\partial \omega} \big|_{\omega = \omega_{\mathrm{s},i}}$.
Then, since $\mathbf{P}_{\mathbf{A}_{\mathrm{s}}}^{\perp} \mathbf{A}_{\mathrm{s}} = \mathbf{0}$, the CRB is simplified as
\begin{equation}
    \text{CRB}(\boldsymbol{\omega}_{\mathrm{s}}) = \frac{\sigma^2}{2K} \left[ \text{Re} \left\{ (\mathbf{D}_{\mathrm{s}}^H \mathbf{P}_{\mathbf{A}_{\mathrm{s}}}^{\perp} \mathbf{D}_{\mathrm{s}}) \circ \hat{\mathbf{R}}_{\mathrm{s}}^T \right\} \right]^{-1}. \label{eq:CRB_exact}
\end{equation}

\begin{remark}
The second term $\mathbf{A}_{\mathrm{s}} \mathbf{S}'[k]$ in $\mathbf{C}[k]$ appears due to the dependence of the signal $\mathbf{s}[k]$ on $\boldsymbol{\omega}_{\mathrm{s}}$.
This dependence leads to just a scalar beamforming gain factor $\mathbf{a}_{\mathrm{t}}^H(\omega_{\mathrm{s},i}) \mathbf{x}[k]$ for each DOA in the received signal $\mathbf{y}_{\mathrm{s}}[k]$.
If the path gain $\alpha_{\mathrm{s},i}[k]$ was known, the CRB could be reduced. However, in practice, the path gain is unknown, so no additional information can be obtained, and the CRB reduces to \eqref{eq:CRB_exact}, which is the same as the case where the signal is independent of $\boldsymbol{\omega}_{\mathrm{s}}$ \cite{LiuCRB,StoicaStochasticCRB}.
\end{remark}

Substituting \eqref{eq:R_s_approx} into \eqref{eq:CRB_exact}, we obtain
\begin{equation}
\text{CRB}(\omega_{\mathrm{s},i}) = [\text{CRB}(\boldsymbol{\omega}_{\mathrm{s}})]_{ii}
\approx \sigma^2 / [2K \sigma_{\alpha,\mathrm{s}}^2 P h(\omega_{\mathrm{s},i})], \label{eq:CRB_Rs_approx}
\end{equation}
where $h(\omega_{\mathrm{s},i})$ represents the spatial sensitivity of the array, given by
\begin{equation}
h(\omega_{\mathrm{s},i}) = \mathbf{d}_{\mathrm{s}}^H(\omega_{\mathrm{s},i}) [\mathbf{I} - \mathbf{A}_{\mathrm{s}}(\mathbf{A}_{\mathrm{s}}^H \mathbf{A}_{\mathrm{s}})^{-1} \mathbf{A}_{\mathrm{s}}^H] \mathbf{d}_{\mathrm{s}}(\omega_{\mathrm{s},i}).
\label{eq:h_exact}
\end{equation}
Note that  $h(\omega_{\mathrm{s},i})$ can be interpreted as the squared norm of the projection of the steering vector derivative $\mathbf{d}_{\mathrm{s}}(\omega_{\mathrm{s},i})$ onto the null space of the signal subspace. 
\begin{comment}
Thus, for sources that are not closely spaced, the steering vectors are asymptotically orthogonal, and $h(\omega_{\mathrm{s},i})$ is dominated by the squared norm of the derivative vector:
\begin{equation}
h(\omega_{\mathrm{s},i}) \approx \|\mathbf{d}_{\mathrm{s}}(\omega_{\mathrm{s},i})\|^2 = \sum_{k=0}^{N_{\mathrm{s}}-1} n_{\mathrm{s},k}^2, \label{eq:h_approx}
\end{equation}
where $\{n_{\mathrm{s},0}, n_{\mathrm{s},1}, \dots, n_{\mathrm{s},N_{\mathrm{s}}-1}\}$ are the positions of antennas in units of half-wavelength. This shows how the angle CRB is related to the antenna array architecture.
\end{comment}
For well-separated sources, we assume that
$\mathbf{a}^{H}(\omega_{\mathrm{s},k})\mathbf{a}(\omega_{\mathrm{s},i})$ and $\mathbf{a}^{H}(\omega_{\mathrm{s},k})\mathbf{d}_{\mathrm{s}}(\omega_{\mathrm{s},i})$ are negligible for all $k\neq i$.
It then follows that
\begin{IEEEeqnarray}{rCl}
    h(\omega_{\mathrm{s},i})
    &\approx&
    \mathbf{d}_{\mathrm{s}}^H(\omega_{\mathrm{s},i}) [\mathbf{I} - N_{\mathrm{s}}^{-1} \mathbf{a}_{\mathrm{s}}(\omega_{\mathrm{s},i}) \mathbf{a}_{\mathrm{s}}^H(\omega_{\mathrm{s},i})] \mathbf{d}_{\mathrm{s}}(\omega_{\mathrm{s},i}) \IEEEnonumber \\
    &=&
    % \|\mathbf{d}(\omega_{\mathrm{s},i})\|^{2} - \frac{\left| \mathbf{a}^{H}(\omega_{\mathrm{s},i}) \mathbf{d}(\omega_{\mathrm{s},i}) \right|^{2}}{N_{\mathrm{s}}}
    % =
    \sum_{k=0}^{N_{\mathrm{s}}-1}
    \left(
    n_{\mathrm{s},k}
    -\bar n_{\mathrm{s}}
    \right)^{2}, \label{eq:h_approx}
\end{IEEEeqnarray}
where $\{n_{\mathrm{s},0}, n_{\mathrm{s},1}, \dots, n_{\mathrm{s},N_{\mathrm{s}}-1}\}$ are the positions of antennas in units of half-wavelength, and
$\bar n_{\mathrm{s}}
=
N_{\mathrm{s}}^{-1}
\sum_{k=0}^{N_{\mathrm{s}}-1}
n_{\mathrm{s},k}$.
This shows how the angle CRB is related to the array architecture, i.e., the second-order central moment of antenna positions.

\begin{remark}
    The results in \eqref{eq:h_approx} suggest that the angle CRB is asymptotically irrelevant to the array architecture of the transmitter, which is different from the capacity analysis in Sec. \ref{sec:capacity} where SAs at both the transmitter and communication receiver have their impacts. This is because the DOA estimation performance under equal-power transmission without complicated precoding is primarily determined by the receive antenna array of the sensing receiver, 
    % (provided that the reflected signal power from the targets is sufficiently strong)
    whereas the transmitter mainly serves to deliver enough power to excite target reflections.
\end{remark}

\subsection{Angle CRB Comparison Between SAs and ULAs} \label{ssc:CRB_SA_ULA}

With the angle CRB analysis result in the previous subsection, we compare the angle CRBs between ULAs and SAs.

\subsubsection{Angle CRB of ULAs}
We consider a ULA with $N_{\mathrm{s}}$ sensing receive antennas, where the positions are $n_{\mathrm{s},k} \in \{0, 1, \dots, N_{\mathrm{s}}-1\}$, leading to the aperture of $O(N_{\mathrm{s}})$. In addition, the sensitivity term scales as:
\begin{equation}
h_{\mathrm{ULA}}(\omega_{\mathrm{s},i}) \approx \sum_{k=0}^{N_{\mathrm{s}}-1} \left(k -N_{\mathrm{s}}'\right)^2 = \frac{N_{\mathrm{s}}^3}{12} + O(N_{\mathrm{s}}^2) = O(N_{\mathrm{s}}^3). \nonumber
\end{equation}
where $N_{\mathrm{s}}' = (N_{\mathrm{s}} - 1)/2$.
Hence, the order of the CRB for a ULA is
\begin{equation}
\text{CRB}_{\mathrm{ULA}}(\omega_{\mathrm{s},i}) = O\left( (K N_{\mathrm{s}}^3)^{-1} \right).
\label{eq:CRB_ULA}
\end{equation}

\subsubsection{Angle CRB of SAs}
When considering a SA with $N_{\mathrm{s}}$ antennas, we typically consider a quadratic aperture $O(N_{\mathrm{s}}^2)$, where antenna positions $n_{\mathrm{s},k}$ scale such that the maximum position $n_{\mathrm{s},N_{\mathrm{s}}-1} = O(N_{\mathrm{s}}^2)$, as in nested, coprime, and minimum redundancy arrays \cite{PiyaPPVnested,PPVCoprimePaper,Moffet}, so as to create a virtual ULA with length of $O(N_{\mathrm{s}}^2)$ using the difference coarray method \cite{PiyaPPVnested}. In this case, there are $O(N_{\mathrm{s}})$ terms in the summation of $h_{\mathrm{SA}}(\omega_{\mathrm{s},i})$ that are of order $(N_{\mathrm{s}}^2)^2 = N_{\mathrm{s}}^4$. 
% \begin{equation}
% h_{\mathrm{SA}}(\omega_{\mathrm{s},i}) \approx \sum_{k=0}^{N_{\mathrm{s}}-1} n_{\mathrm{s},k}^2
% \end{equation}
%that are of order $(N_{\mathrm{s}}^2)^2 = N_{\mathrm{s}}^4$. 
Thus, the sensitivity term scales as:
\begin{equation}
h_{\mathrm{SA}}(\omega_{\mathrm{s},i}) = O(N_{\mathrm{s}} \cdot (N_{\mathrm{s}}^2)^2) = O(N_{\mathrm{s}}^5).
\end{equation}
Hence, the order of the CRB for a typical SA is
\begin{equation}
\text{CRB}_{\mathrm{SA}}(\omega_{\mathrm{s},i}) = O\left( (K N_{\mathrm{s}}^5)^{-1} \right).
\label{eq:CRB_SA}
\end{equation}

The above analysis shows that since a SA can expand the aperture of the antenna array from $O(N_{\mathrm{s}})$ to $O(N_{\mathrm{s}}^2)$, while maintaining $N_{\mathrm{s}}$ physical antennas, the angle CRB can be improved from $O\left(1 / N_{\mathrm{s}}^3 \right)$ to $O\left(1 / N_{\mathrm{s}}^5 \right)$.
\begin{example}[Angle CRB of nested arrays]
    Taking an $N_{\mathrm{s}}$-antenna nested array \cite{PiyaPPVnested} given in Definition \ref{def:SA} as an example, we can show that the dominant term of $h_{\mathrm{SA}}(\omega_{\mathrm{s},i})$ is $5 N_{\mathrm{s}}^5/768$.
\end{example}

\section{MIMO Systems With Modern Waveforms} \label{sec:waveforms}

In this section, we extend the snapshot-based spatial MIMO model in Sec. \ref{sec:model} to models with two common modern waveforms OTFS and OFDM. We show that the results derived earlier still hold when these waveforms are incorporated. We consider MIMO-OTFS and MIMO-OFDM ISAC systems similar to the system in Sec. \ref{sec:model}.
Each scattering path $i$ is also associated with range and Doppler velocity $(R_{\mathrm{t},i}, v_{\mathrm{t},i})$ and $(R_{\mathrm{r},i}, v_{\mathrm{r},i})$ with respect to the transmitter and communication receiver, respectively.

\subsection{Communication Signal Models for MIMO-OTFS and MIMO-OFDM ISAC Systems}
For the standard MIMO-OTFS system \cite{Srivastava,Raviteja2019} and MIMO-OFDM system \cite{XiaoSwindlehurst}, the vectorized received communication signal within a CPI can be expressed as
\begin{equation}
    \mathbf{y}_{z,\mathrm{c}} = \mathbf{H}_{z,\mathrm{c}} \mathbf{x}_{z} + \mathbf{w}_{z,\mathrm{c}} \in \mathbb{C}^{N_{\mathrm{r}} LM \times 1},
    \label{eq:MIMO_WF_c}
\end{equation}
where $z \in \{\mathrm{DD}, \mathrm{TF}\}$ represents delay-Doppler-domain OTFS and time-frequency-domain OFDM, respectively, $M$ subsymbols and $L$ subcarriers are considered for the signal, and
\begin{equation}
    \mathbf{H}_{z,\mathrm{c}} = \sum_{i=1}^{D_{\mathrm{c}}} \alpha_{\mathrm{c},i} (\mathbf{S}_{\mathrm{c},i} \otimes \mathbf{U}_{z,\mathrm{c},i}) \label{eq:Hzc}
\end{equation}
is the effective MIMO-OTFS or MIMO-OFDM communication channel matrix, in which $\alpha_{\mathrm{c},i}$ is the channel coefficient of scatterer $i$; $\mathbf{x}_{z} = \text{vec}(\mathbf{X}_{z}^T) \in \mathbb{C}^{N_{\mathrm{t}} LM \times 1}$ is the vectorized transmitted symbols from the $N_{\mathrm{t}} \times LM$ delay-Doppler or time-frequency grid $\mathbf{X}_{z}$; $\mathbf{w}_{z,\mathrm{c}}$ is the additive complex white Gaussian noise (AWGN) at delay-Doppler or time-frequency domain of the communication receiver; $\mathbf{U}_{z,\mathrm{c},i} \in \mathbb{C}^{LM \times LM}$ is the unitary OTFS or OFDM operator (delay-Doppler shift) for communication defined in (\ref{eq:UDDci}) or (\ref{eq:UTFci}); and $\mathbf{S}_{\mathrm{c},i} = \mathbf{a}_{\mathrm{r}}(\omega_{\mathrm{r},i})\mathbf{a}_{\mathrm{t}}^H(\omega_{\mathrm{t},i}) \in \mathbb{C}^{N_{\mathrm{r}} \times N_{\mathrm{t}}}$ is the spatial signature of scatterer $i$, as defined in \eqref{eq:H_c_k}.

For the MIMO-OTFS system,
\begin{equation}
\mathbf{U}_{\mathrm{DD,c},i} = (\mathbf{F}_{M} \otimes \mathbf{I}_{L}) \left( \bm{\Pi}^{l_{\mathrm{c},i}} \bm{\Delta}^{k_{\mathrm{c},i}} \right) (\mathbf{F}_{M}^{H} \otimes \mathbf{I}_{L}) \label{eq:UDDci}
\end{equation}
is a unitary matrix, where $\mathbf{F}_{M} \in \mathbb{C}^{M \times M}$ is the (normalized) $M$-point Discrete Fourier Transform (DFT) matrix; $\bm{\Pi} \in \mathbb{R}^{LM \times LM}$ is the forward cyclic shift permutation matrix, defined such that $[\bm{\Pi}]_{p,q} = 1$ if $p = (q+1) \pmod{LM}$ and $0$ otherwise; and $\bm{\Delta} \in \mathbb{C}^{LM \times LM}$ is a diagonal matrix with diagonal entries given by $[\bm{\Delta}]_{n,n} = e^{j2\pi \frac{n}{LM}}$ for $n=0, 1, \dots, LM-1$.
The delay shift $l_{\mathrm{c},i} = \frac{\tau_{\mathrm{c},i}}{\Delta\tau}$ is assumed to be an integer, where $\Delta\tau = 1 / (L \Delta f)$ is the delay grid spacing, $\Delta f$ is the subcarrier spacing, and $\tau_{\mathrm{c},i} = (R_{\mathrm{t},i} + R_{\mathrm{r},i}) / c$ is the delay of scattering path $i$. We assume no fractional delay shift for simplicity, as the delay resolution $\Delta\tau = 1 / (L \Delta f)$ is commonly sufficient to approximate the path delays in typical wide-band systems \cite{TseWirelessComm,Raviteja2018}.
On the other hand, the Doppler shift $k_{\mathrm{c},i} = \frac{\nu_{\mathrm{c},i}}{\Delta\nu}$ could be either an integer or non-integer, where $\Delta\nu = \Delta f / M$ is the Doppler grid spacing, $\nu_{\mathrm{c},i} = (v_{\mathrm{t},i} + v_{\mathrm{r},i})f_c/c$ is the Doppler frequency shift for the path through scatterer $i$, $f_c = c/\lambda_c$ is the carrier frequency, and $c$ is the speed of light. When $k_{\mathrm{c},i}$ is a non-integer, corresponding to the fractional Doppler case, $\bm{\Delta}^{k_{\mathrm{c},i}}$ should be understood as a diagonal matrix with diagonal entries given by $[\bm{\Delta}^{k_{\mathrm{c},i}}]_{n,n} = e^{j2\pi \frac{nk_{\mathrm{c},i}}{LM}}$ for $n=0, 1, \dots, LM-1$.

For the MIMO-OFDM system, we assume that $T = T_{\mathrm{d}} + T_{\mathrm{cp}}$ is the total symbol duration, $T_{\mathrm{d}} = 1 / \Delta f$ is the OFDM symbol duration, and the cyclic-prefix (CP) duration $T_{\mathrm{cp}}$ is long enough so that there is no inter-symbol interference. Then, after CP removal and DFT, the received signal follows \eqref{eq:MIMO_WF_c} and \eqref{eq:Hzc}, where $\mathbf{U}_{\mathrm{TF,c},i}$ is a diagonal matrix whose diagonal terms are
\begin{equation}
    \begin{aligned}
        &[\mathbf{U}_{\mathrm{TF,c},i}]_{l+mL,l+mL} \\
&= e^{-j2\pi l\Delta f(R_{\mathrm{t},i} + R_{\mathrm{r},i})/c} e^{j2\pi mT(v_{\mathrm{t},i} + v_{\mathrm{r},i})f_c/c} \\
    &= e^{-j2\pi l_{\mathrm{c},i} l / L} e^{j2\pi k_{\mathrm{c},i} T \Delta f m / M}
    \end{aligned} \label{eq:UTFci}
\end{equation}
for $0\leq l \leq L-1$ and $0\leq m \leq M-1$.

\subsection{Sensing Signal Models for MIMO-OTFS and MIMO-OFDM ISAC Systems}

We suppose that there exist $D_{\mathrm{s}}$ targets for the sensing channel, and denote $(R_{\mathrm{s},i}, v_{\mathrm{s},i}, \theta_{\mathrm{s},i})$ as the range, Doppler velocity, and angle of target $i$ for monostatic sensing in Fig. \ref{fig:MIMO_channel}. Then, the vectorized received sensing signal for the MIMO-OTFS system \cite{Srivastava,Raviteja2019} and MIMO-OFDM system \cite{XiaoSwindlehurst} under monostatic sensing can be expressed as
\begin{equation}
    \mathbf{y}_{z,\mathrm{s}} = \mathbf{H}_{z,\mathrm{s}} \mathbf{x}_{z} + \mathbf{w}_{z,\mathrm{s}} \in \mathbb{C}^{N_{\mathrm{s}} LM \times 1}, 
    \label{eq:MIMO_WF_s}
\end{equation}
where $z \in \{\mathrm{DD}, \mathrm{TF}\}$ represents OTFS and OFDM, respectively, $\mathbf{H}_{z,\mathrm{s}} = \sum_{i=1}^{D_{\mathrm{s}}} \alpha_{\mathrm{s},i} (\mathbf{S}_{\mathrm{s},i} \otimes \mathbf{U}_{z,\mathrm{s},i})$ is the MIMO-OTFS or MIMO-OFDM sensing channel matrix, $\mathbf{w}_{z,\mathrm{s}}$ is the AWGN at delay-Doppler or time-frequency domain of the sensing receiver, $\alpha_{\mathrm{s},i}$ is the channel coefficient for the sensing path of target $i$, $\mathbf{S}_{\mathrm{s},i} = \mathbf{a}_{\mathrm{s}}(\omega_{\mathrm{s},i})\mathbf{a}_{\mathrm{t}}^H(\omega_{\mathrm{s},i}) \in \mathbb{C}^{N_{\mathrm{s}} \times N_{\mathrm{t}}}$ is the spatial signature of target $i$, with $\omega_{\mathrm{s},i} = 2\pi d\sin\theta_{\mathrm{s},i}/\lambda_c$ being the angle of target $i$ and $\mathbf{a}_{\mathrm{s}}(\omega_{\mathrm{s},i})$ follows again the expression in \eqref{eq:a_q_w} in consideration of the received sensing antenna array architecture.
For MIMO-OTFS,
\begin{equation}
\mathbf{U}_{\mathrm{DD,s},i} = (\mathbf{F}_{M} \otimes \mathbf{I}_{L}) \left( \bm{\Pi}^{l_{\mathrm{s},i}} \bm{\Delta}^{k_{\mathrm{s},i}} \right) (\mathbf{F}_{M}^{H} \otimes \mathbf{I}_{L}), \label{eq:U_DD_s_i}
\end{equation}
where $l_{\mathrm{s},i} = \frac{\tau_{\mathrm{s},i}}{\Delta\tau} = 2L \Delta f R_{\mathrm{s},i} / c$ is the delay shift of target $i$ and $k_{\mathrm{s},i} = \frac{\nu_{\mathrm{s},i}}{\Delta\nu} = 2\frac{M}{\Delta f} v_{\mathrm{s},i}f_c/c$ is the Doppler shift of target $i$. Following the same approach as for the communication channel, we again assume an integer delay shift.
For MIMO-OFDM, $\mathbf{U}_{\mathrm{TF,s},i}$ is a diagonal matrix whose diagonal terms are
\begin{equation}
    \begin{aligned}  \relax [\mathbf{U}_{\mathrm{TF,s},i}]_{l+mL,l+mL} &= e^{-j4\pi l\Delta fR_{\mathrm{s},i}/c} e^{j4\pi mTv_{\mathrm{s},i}f_c/c} \\
    &= e^{-j2\pi l_{\mathrm{s},i} l / L} e^{j2\pi k_{\mathrm{s},i} T \Delta f m / M}.
    \end{aligned} \label{eq:U_TF_s_i}
\end{equation}
By comparing \eqref{eq:MIMO_WF_s} with \eqref{eq:MIMO_WF_c}, it is clear that they share the same expression, with the only difference being that monostatic sensing assumes identical DODs and DOAs.

\subsection{Channel Capacity of SAs and ULAs in MIMO-OTFS and MIMO-OFDM Systems}

In this subsection, we apply the proposed majorization method in Sec. \ref{sec:capacity} for the MIMO-OTFS and MIMO-OFDM systems.
We consider the channel $\mathbf{H}_{z,\mathrm{c}}$, $z \in \{\mathrm{DD}, \mathrm{TF}\}$, in (\ref{eq:Hzc}).
The Gram matrix $\mathbf{G}_{z,\mathrm{c}} = \mathbf{H}_{z,\mathrm{c}} \mathbf{H}_{z,\mathrm{c}}^H$ is
\begin{align}
\mathbf{G}_{z,\mathrm{c}} &= \sum_{i=1}^{D_{\mathrm{c}}} \alpha_{\mathrm{c},i} (\mathbf{S}_{\mathrm{c},i} \otimes \mathbf{U}_{z,\mathrm{c},i}) \sum_{j=1}^{D_{\mathrm{c}}} \alpha_{\mathrm{c},j}^* (\mathbf{S}_{\mathrm{c},j} \otimes \mathbf{U}_{z,\mathrm{c},j}). 
\end{align}
Taking the expectation over the path gains and noting that $\E[\alpha_{\mathrm{c},i} \alpha_{\mathrm{c},j}^*] = \sigma_{\alpha,\mathrm{c}}^2 \delta_{ij}$, we have
\begin{IEEEeqnarray}{rCl}
\E[\mathbf{G}_{z,\mathrm{c}}] &=& \E \left[ \sum_{i=1}^{D_{\mathrm{c}}} \sigma_{\alpha,\mathrm{c}}^2 (\mathbf{S}_{\mathrm{c},i} \otimes \mathbf{U}_{z,\mathrm{c},i}) (\mathbf{S}_{\mathrm{c},i} \otimes \mathbf{U}_{z,\mathrm{c},i})^H \right]. \IEEEeqnarraynumspace
\end{IEEEeqnarray}
Since $\mathbf{U}_{z,\mathrm{c},i}$ is a unitary matrix for any $l_{\mathrm{c},i}$, $k_{\mathrm{c},i}$, and $i$, we have
\begin{IEEEeqnarray}{rCl}
\E[\mathbf{G}_{z,\mathrm{c}}] &=& \E \left[ \sum_{i=1}^{D_{\mathrm{c}}} \sigma_{\alpha,\mathrm{c}}^2 \mathbf{S}_{\mathrm{c},i} \mathbf{S}_{\mathrm{c},i}^H \right] \otimes \mathbf{I}_{LM} = \E[\mathbf{G}] \otimes \mathbf{I}_{LM} \IEEEeqnarraynumspace
\end{IEEEeqnarray}
where $\E[\mathbf{G}]$ is as given in (\ref{eq:E_G}).
Using \eqref{eq:E_G_R}, we obtain
\begin{IEEEeqnarray}{rCl}
\E[\mathbf{G}_{z,\mathrm{c}}] &=& D_{\mathrm{c}} \sigma_{\alpha,\mathrm{c}}^2 N_{\mathrm{t}} \mathbf{R}_{\mathrm{r}} \otimes \mathbf{I}_{LM},\label{eq:E_G_zc}
\end{IEEEeqnarray}
where the entries $R_{\mathrm{r},k,m}$ of $\mathbf{R}_{\mathrm{r}}$ are given by \eqref{eq:R_rkm}.
%\begin{equation}
%R_{\mathrm{r},k,m} = \E[e^{j\omega (n_{\mathrm{r},k} - n_{\mathrm{r},m})}].
%\end{equation}
%in the general case, and are given by (\ref{eq:Rkm}) if the probability density function $f_{\Omega}(\omega)$ of $\omega$ follows the special case (\ref{eq:PDF_omega}).
Note that we have
\begin{IEEEeqnarray}{rCl}
\boldsymbol{\lambda}(\E[\mathbf{G}_{z,\mathrm{c}}]) &=& \boldsymbol{\lambda}(\E[\mathbf{G}]) \otimes \mathbf{1}_{LM} ,\label{eq:eig_E_G_zc}
\end{IEEEeqnarray}
where $\mathbf{1}_{LM}$ is a column vector of ones of length $LM$.
That is, each spatial eigenvalue is simply repeated $LM$ times in the overall MIMO-OTFS and MIMO-OFDM systems.
This result shows that the delay-Doppler-domain or time-frequency-domain processing in OTFS or OFDM (represented by the unitary operators $\mathbf{U}_{z,\mathrm{c},i}$) does not change the ``spread" or distribution of the spatial eigenvalues on average; it merely scales the total number of available modes by the delay-Doppler or time-frequency resource factor $LM$.

\begin{remark}\label{rmk:same_capacity}
    Our results in \eqref{eq:E_G_zc} and \eqref{eq:eig_E_G_zc} reveal that the {\it ergodic capacities} of both MIMO-OTFS and MIMO-OFDM systems are governed by the array structure within a single resource block, i.e., $\mathbf{G}$, as well as the number $LM$ of resource blocks. Therefore, the comparisons between SAs and ULAs, along with the conclusions established in Sections~\ref{ssc:ULA_mjrz_SA} and \ref{ssc:edof}, can be directly extended to both MIMO-OTFS and MIMO-OFDM systems. However, these results should be used solely for comparing different array structures and should not be interpreted as a comparison between MIMO-OTFS and MIMO-OFDM systems. This is because ergodic capacity characterizes only the fundamental information-theoretic limit and does not account for receiver complexity or finite blocklength performance, both of which must be considered when comparing different waveforms.
    Moreover, the OFDM model \eqref{eq:UTFci} is based on assumptions not required for OTFS, including no Doppler shifts or inter-carrier interference within a subsymbol, and one CP for each subsymbol to mitigate inter-subsymbol interference.
\end{remark}

\subsection{Angle CRB of SAs and ULAs in MIMO-OTFS and MIMO-OFDM Systems}

Next, we show how to extend the angle CRB analysis in Sec. \ref{sec:CRB} to the MIMO-OTFS and MIMO-OFDM systems.
We consider the vectorized received sensing signal $\mathbf{y}_{z,\mathrm{s}}$, $z \in \{\mathrm{DD}, \mathrm{TF}\}$, for MIMO-OTFS and MIMO-OFDM systems in \eqref{eq:MIMO_WF_s}.
Using the identity that $(\mathbf{B} \otimes \mathbf{A})\text{vec}(\mathbf{Z}) = \text{vec}(\mathbf{A}\mathbf{Z}\mathbf{B}^T)$ and noting that $\mathbf{x}_{z} = \text{vec}(\mathbf{X}_{z}^T)$, we obtain $(\mathbf{S}_{\mathrm{s},i} \otimes \mathbf{U}_{z,\mathrm{s},i}) \text{vec}(\mathbf{X}_{z}^T) = \text{vec}(\mathbf{U}_{z,\mathrm{s},i} \mathbf{X}_{z}^T \mathbf{S}_{\mathrm{s},i}^T)$, leading to
\begin{IEEEeqnarray}{rCl}
    \mathbf{y}_{z,\mathrm{s}} = \text{vec} \left( \sum_{i=1}^{D_{\mathrm{s}}} \alpha_{\mathrm{s},i} \mathbf{U}_{z,\mathrm{s},i} \mathbf{X}_{z}^T \mathbf{a}_{\mathrm{t}}^*(\omega_{\mathrm{s},i})\mathbf{a}_{\mathrm{s}}^T(\omega_{\mathrm{s},i}) \right) + \mathbf{w}_{z,\mathrm{s}}. \IEEEnonumber
\end{IEEEeqnarray}
We then let $\mathbf{Y}_{z,\mathrm{s}}, \mathbf{W}_{z,\mathrm{s}} \in \mathbb{C}^{N_{\mathrm{s}} \times LM}$ be the data and noise matrix such that $\mathbf{y}_{z,\mathrm{s}} = \text{vec}(\mathbf{Y}_{z,\mathrm{s}}^T)$ and $\mathbf{w}_{z,\mathrm{s}} = \text{vec}(\mathbf{W}_{z,\mathrm{s}}^T)$. By removing the $\text{vec}(\cdot)$ operator, we obtain
\begin{IEEEeqnarray}{rCl}
    \mathbf{Y}_{z,\mathrm{s}}^T = \sum_{i=1}^{D_{\mathrm{s}}} \underbrace{\left( \alpha_{\mathrm{s},i} \mathbf{U}_{z,\mathrm{s},i} \mathbf{X}_{z}^T \mathbf{a}_{\mathrm{t}}^*(\omega_{\mathrm{s},i}) \right)}_{=\mathbf{s}_{z,i} \in \mathbb{C}^{LM \times 1}} \mathbf{a}_{\mathrm{s}}^T(\omega_{\mathrm{s},i}) + \mathbf{W}_{z,\mathrm{s}}^T, \IEEEnonumber
\end{IEEEeqnarray}
in which we denote the signal vector for target $i$ as $\mathbf{s}_{z,i}=\alpha_{\mathrm{s},i} \mathbf{U}_{z,\mathrm{s},i} \mathbf{X}_{z}^T \mathbf{a}_{\mathrm{t}}^*(\omega_{\mathrm{s},i})$, and its $t$-th element $s_{z,i}(t)$ represents the complex amplitude of the $i$-th target at the $t$-th Delay-Doppler bin, with $t=1, \dots, LM$. Hence, we can regard the $t$-th column of $\mathbf{Y}_{z,\mathrm{s}}$, denoted by $\mathbf{y}_{z,\mathrm{s}}(t)$, as the received signal at snapshot $t$, yielding
\begin{equation}
\begin{aligned}
    \mathbf{y}_{z,\mathrm{s}}(t) &=\sum_{i=1}^{D_{\mathrm{s}}} \mathbf{a}_{\mathrm{s}}(\omega_{\mathrm{s},i}) s_{z,i}(t) + \mathbf{w}_{z,\mathrm{s}}(t)\\
    &=\mathbf{A}_{\mathrm{s}}(\boldsymbol{\omega}_{\mathrm{s}}) \mathbf{s}_{z}(t) + \mathbf{w}_{z,\mathrm{s}}(t)
    \label{eq:sensing_model_WF_DOA}    
\end{aligned}
\end{equation}
where $\mathbf{s}_{z}(t) = [s_{z,1}(t)\; \cdots\; s_{z,D_{\mathrm{s}}}(t)]^T$ is the signal vector at snapshot $t$, and $\mathbf{w}_{z,\mathrm{s}}(t)$ is the $t$-th column of $\mathbf{W}_{z,\mathrm{s}}$.

As in Sec. \ref{sec:CRB}, by using the \textit{conditional model} for CRB \cite{LiuCRB,StoicaStochasticCRB}, the derivation of angle CRB conditioning on the transmitted symbols $\mathbf{x}_{z}$ is related to finding the empirical source covariance matrix
\begin{equation}
    \hat{\mathbf{R}}_{z,\mathrm{s}} = \frac{1}{LM} \sum_{t=1}^{LM} \mathbf{s}_{z}(t)\mathbf{s}_{z}^H(t)\in \mathbb{C}^{D_{\mathrm{s}} \times D_{\mathrm{s}}}, \label{eq:R_zs}
\end{equation}
where the $(i, j)$-th entry of this matrix is
\begin{IEEEeqnarray}{rCl}
    [\hat{\mathbf{R}}_{z,\mathrm{s}}]_{i,j} &=& \frac{1}{LM} \sum_{t=1}^{LM} s_{z,i}(t)s_{z,j}^*(t)= \frac{1}{LM} \mathbf{s}_{z,j}^H \mathbf{s}_{z,i} \IEEEnonumber \\
    &=& \frac{\alpha_{\mathrm{s},i} \alpha_{\mathrm{s},j}^*}{LM} \mathbf{a}_{\mathrm{t}}^T(\omega_{\mathrm{s},j}) \mathbf{X}_{z}^* \mathbf{U}_{z,\mathrm{s},j}^H \mathbf{U}_{z,\mathrm{s},i} \mathbf{X}_{z}^T \mathbf{a}_{\mathrm{t}}^*(\omega_{\mathrm{s},i}) \IEEEnonumber \\
    &=& \frac{\alpha_{\mathrm{s},i} \alpha_{\mathrm{s},j}^*}{LM} \sum_{p=1}^{LM} \sum_{q=1}^{LM} [\mathbf{U}_{z,\mathrm{s},j}^H \mathbf{U}_{z,\mathrm{s},i}]_{p,q} b_{z,i,j,p,q} \label{eq:RDD_sij}
\end{IEEEeqnarray}
where $b_{z,i,j,p,q} = \mathbf{a}_{\mathrm{t}}^H(\omega_{\mathrm{s},i}) \mathbf{x}_{z,q} \mathbf{x}_{z,p}^H \mathbf{a}_{\mathrm{t}}(\omega_{\mathrm{s},j})$, and $\mathbf{x}_{z,t}$ is the $t$-th column of $\mathbf{X}_{z}$. 
When the transmitted symbols in $\mathbf{X}_{z}$ are considered as realizations of independent and identically distributed (i.i.d.) random variables with $E[\mathbf{X}_{z}] = \mathbf{0}$ and $E[\text{vec}(\mathbf{X}_{z}) \text{vec}(\mathbf{X}_{z})^H] = \frac{P}{N_{\mathrm{t}}} \mathbf{I}_{N_{\mathrm{t}} LM}$, we know that $\mathbf{x}_{z,t}$ is also i.i.d. with $E[\mathbf{x}_{z,p} \mathbf{x}_{z,q}^H] = \delta_{p,q} \frac{P}{N_{\mathrm{t}}} \mathbf{I}_{N_{\mathrm{t}}}$. Then, by following the law of large numbers with $LM \to \infty$, the sample mean converges to the expected value, leading to
\begin{equation}
    \begin{aligned}
     &[\hat{\mathbf{R}}_{z,\mathrm{s}}]_{i,j} \\
     &\stackrel{LM \to \infty}{=}%\xrightarrow{LM \to \infty}
     \frac{\alpha_{\mathrm{s},i} \alpha_{\mathrm{s},j}^*}{LM} \sum_{p=1}^{LM} [\mathbf{U}_{z,\mathrm{s},j}^H \mathbf{U}_{z,\mathrm{s},i}]_{p,p} \left( \frac{P}{N_{\mathrm{t}}} \mathbf{a}_{\mathrm{t}}^H(\omega_{\mathrm{s},i}) \mathbf{a}_{\mathrm{t}}(\omega_{\mathrm{s},j}) \right)\\
     &=\frac{\alpha_{\mathrm{s},i} \alpha_{\mathrm{s},j}^* P}{N_{\mathrm{t}} LM} \text{tr}(\mathbf{U}_{z,\mathrm{s},j}^H \mathbf{U}_{z,\mathrm{s},i}) \mathbf{a}_{\mathrm{t}}^H(\omega_{\mathrm{s},i}) \mathbf{a}_{\mathrm{t}}(\omega_{\mathrm{s},j}).
    \end{aligned} \nonumber
\end{equation}

For MIMO-OTFS, we consider $\mathbf{U}_{\mathrm{DD,s},i}$ in \eqref{eq:U_DD_s_i}. Since the trace is invariant under the unitary transformation $(\mathbf{F}_{M} \otimes \mathbf{I}_{L})$, by applying the cyclic property of the trace, we obtain
\begin{equation}
    \text{tr}(\mathbf{U}_{\mathrm{DD,s},j}^H \mathbf{U}_{\mathrm{DD,s},i}) = \text{tr}\left( \bm{\Delta}^{-k_{\mathrm{s},j}} \bm{\Pi}^{l_{\mathrm{s},i} - l_{\mathrm{s},j}} \bm{\Delta}^{k_{\mathrm{s},i}} \right).
\end{equation}
Recall that the delay shifts $l_{\mathrm{s},i} = \frac{\tau_{\mathrm{s},i}}{\Delta\tau} = 2L \Delta f R_{\mathrm{s},i} / c$ are assumed to be integers for all $i$, due to the use of wideband system (i.e., $L \to \infty$), we can obtain\footnote{Indeed, we can also show that the trace vanishes for any $i \neq j$ if we also assume integer Doppler shifts such that $k_{\mathrm{s},i} - k_{\mathrm{s},j}$ is a nonzero integer, but we do not require this.}
\begin{equation}
    \text{tr}(\mathbf{U}_{\mathrm{DD,s},j}^H \mathbf{U}_{\mathrm{DD,s},i}) = \sum_{n=0}^{LM-1} [\bm{\Pi}^{l_{\mathrm{s},i} - l_{\mathrm{s},j}}]_{n,n} e^{j2\pi \frac{n(k_{\mathrm{s},i} - k_{\mathrm{s},j})}{LM}} = 0 \nonumber
\end{equation}
$\forall i \neq j$. Consequently, $[\hat{\mathbf{R}}_{\mathrm{DD,s}}]_{i,j} = 0,\forall i \neq j$ as $L\to\infty$, showing that MIMO-OTFS provides an inherent mathematical decorrelation of targets on the angle domain. On the other hand, when $i = j$, we have $\text{tr}(\mathbf{U}_{\mathrm{DD,s},i}^H \mathbf{U}_{\mathrm{DD,s},i}) = LM$. As a result, we can conclude that for large $L$,
\begin{equation}
\hat{\mathbf{R}}_{\mathrm{DD,s}} \approx \text{diag}\left( |\alpha_{\mathrm{s},1}|^2 P, \dots, |\alpha_{\mathrm{s},D_{\mathrm{s}}}|^2 P \right),\label{eq:R_DDs_approx}
\end{equation}
where the approximation becomes exact almost surely as $L\to\infty$.

For MIMO-OFDM, we consider $\mathbf{U}_{\mathrm{TF,s},i}$ in \eqref{eq:U_TF_s_i}. Since $\mathbf{U}_{\mathrm{TF,s},i}$ is a diagonal matrix, we can obtain
% \begin{IEEEeqnarray}{rCl}
%     \text{tr}(\mathbf{U}_{\mathrm{TF,s},j}^H \mathbf{U}_{\mathrm{TF,s},i}) &=& \frac{1}{L} \sum_{l=0}^{L-1} e^{-j\frac{4\pi \Delta f \Delta R_{ij}}{c} l} \IEEEnonumber \\
%     && {}\cdot \frac{1}{M} \sum_{m=0}^{M-1} e^{j\frac{4\pi T f_c \Delta v_{ij}}{c} m}, \IEEEnonumber
% \end{IEEEeqnarray}
\begin{IEEEeqnarray}{rCl}
    \text{tr}(\mathbf{U}_{\mathrm{TF,s},j}^H \mathbf{U}_{\mathrm{TF,s},i}) &=& \sum_{l=0}^{L-1} e^{-j\frac{4\pi \Delta f \Delta R_{ij}}{c} l} \sum_{m=0}^{M-1} e^{j\frac{4\pi T f_c \Delta v_{ij}}{c} m}, \IEEEnonumber
\end{IEEEeqnarray}
where $\Delta R_{ij} = R_{\mathrm{s},i} - R_{\mathrm{s},j}$ and $\Delta v_{ij} = v_{\mathrm{s},i} - v_{\mathrm{s},j}$.
Using the geometric series sum formula for the $l$-sum, we get
\begin{equation}
\left\vert \frac{1}{L} \sum_{l=0}^{L-1} e^{-j\frac{4\pi \Delta f \Delta R_{ij}}{c} l} \right\vert=\left| \frac{\sin(2L \pi \Delta f \Delta R_{ij} / c)}{L \sin(2\pi \Delta f \Delta R_{ij} / c)} \right|.
\label{eq:l_sum}
\end{equation}
Thus, as $L \to \infty$, \eqref{eq:l_sum} vanishes for any $\Delta R_{ij} \neq 0$. More precisely, if we have $2L \Delta f \Delta R_{ij} / c \gg 1$ for any $i \neq j$, which is indeed true in typical wide-band systems whose delay resolution $\Delta\tau = 1 / (L \Delta f)$ is sufficient to approximate the path delays to the nearest sampling points \cite{TseWirelessComm,Raviteja2018}, then \eqref{eq:l_sum} vanishes.
Equivalently, if we use the assumption that the delay shifts $l_{\mathrm{s},i} = \frac{\tau_{\mathrm{s},i}}{\Delta\tau} = 2L \Delta f R_{\mathrm{s},i} / c$ are integers for all $i$ (and $l_{\mathrm{s},i} \neq l_{\mathrm{s},j}$ for $i \neq j$), we can also obtain that \eqref{eq:l_sum} is zero.\footnote{Similarly, the $m$-sum vanishes for any $i \neq j$ if we also assume $M \to \infty$ or integer Doppler shifts, but we do not require this.} As a consequence, the off-diagonal elements of $\hat{\mathbf{R}}_{\mathrm{TF,s}}$ should converge to zero if $L\to\infty$, yielding a diagonal source covariance matrix given by
\begin{equation}
\hat{\mathbf{R}}_{\mathrm{TF,s}} \approx \text{diag}\left( |\alpha_{\mathrm{s},1}|^2 P, \dots, |\alpha_{\mathrm{s},D_{\mathrm{s}}}|^2 P \right), \label{eq:R_TFs_approx}
\end{equation}
where the approximation becomes exact almost surely as $L\to\infty$.
Simulations in Sec. \ref{sec:sim} show that (\ref{eq:R_DDs_approx}) and (\ref{eq:R_TFs_approx}) are highly accurate even for a moderate number of subcarriers, such as $L=16$.

As in Sec. \ref{sec:CRB}, we can derive the conditional CRB for the DOA parameters $\boldsymbol{\omega}_{\mathrm{s}}$ for the MIMO-OTFS and MIMO-OFDM systems as
\begin{equation}
    \text{CRB}_z(\boldsymbol{\omega}_{\mathrm{s}}) = \frac{\sigma^2}{2LM} \left[ \text{Re} \left\{ (\mathbf{D}_{\mathrm{s}}^H \mathbf{P}_{\mathbf{A}_{\mathrm{s}}}^{\perp} \mathbf{D}_{\mathrm{s}}) \odot \hat{\mathbf{R}}_{z,\mathrm{s}}^T \right\} \right]^{-1}, 
    \label{eq:CRB__WF_exact}
\end{equation}
$z \in \{\mathrm{DD}, \mathrm{TF}\}$.
Using (\ref{eq:R_DDs_approx}) and (\ref{eq:R_TFs_approx}), we can obtain
\begin{equation}
\text{CRB}_z(\omega_{\mathrm{s},i}) = [\text{CRB}_z(\boldsymbol{\omega}_{\mathrm{s}})]_{ii}
\approx \frac{\sigma^2}{2LM P_i h(\omega_{\mathrm{s},i})}, \label{eq:CRB_Rs_WF_approx}
\end{equation}
where $P_i = |\alpha_{\mathrm{s},i}|^2 P$ is the effective signal power, so the two systems have the same CRB.
Comparing \eqref{eq:CRB_Rs_WF_approx} to \eqref{eq:CRB_Rs_approx}, we see that like the MIMO system, the CRB for MIMO-OTFS and MIMO-OFDM systems also depends on $h(\omega_{\mathrm{s},i})$ in \eqref{eq:h_exact}, so the angle CRB comparison between SAs and ULAs in Sec. \ref{ssc:CRB_SA_ULA} also applies to MIMO-OTFS and MIMO-OFDM systems.

\section{Simulations} \label{sec:sim}
In this section, we present numerical simulation results to validate our theoretical analysis of the fundamental ISAC limits for the MIMO system defined in Sec. \ref{sec:model}, and MIMO-OTFS and MIMO-OFDM systems defined in Sec. \ref{sec:waveforms}. We specifically focus on the impact of the geometry of the antenna array on the ergodic channel capacity and the CRB for angle estimation.
% For all simulations, we consider high-dynamics mmWave ISAC channels, as given in Sec. \ref{sec:model}.
We use $K=32$ snapshots for the model in Sec. \ref{sec:model}.
The carrier frequency is $f_c = 28$ GHz, the subcarrier spacing is $\Delta f = 120$ kHz, the OFDM symbol duration is $T_{\mathrm{d}} = 8.33 \mu s$, the cyclic-prefix duration is $T_{\mathrm{cp}} = 0.59 \mu s$, and the antenna unit spacing is $d = 0.5 c / f_c$.
In the geometric MIMO channel model, the path gains are i.i.d. zero-mean unit-variance circularly symmetric complex Gaussian, and all DODs and DOAs are i.i.d. uniformly distributed over the interval $(-\omega_{\text{max}}, \omega_{\text{max}}]$.
Unless otherwise specified, we use the following simulation parameters. The total number of subcarriers and OTFS or OFDM symbols are set to $L=16$ and $M=2$, respectively.
The numbers of antennas are $N_{\mathrm{t}} = N_{\mathrm{r}} = N_{\mathrm{s}} = 16$ for the ISAC transmitter, communication receiver, and sensing receiver. The number of communication paths is $D_{\mathrm{c}} = 8$. The number of sensing targets is $D_{\mathrm{s}} = 3$.
In Monte Carlo simulations, the delay shifts $l_{\mathrm{c},i}$ and $l_{\mathrm{s},i}$ for all communication and sensing paths are realizations of i.i.d. uniformly distributed integers, and the Doppler shifts $k_{\mathrm{c},i}$ and $k_{\mathrm{s},i}$ are realizations of i.i.d. uniformly distributed real numbers. 
We assume that the three arrays used at the ISAC transmitter, communication receiver, and sensing receiver are the same.
To compare with an $N$-antenna ULA, we consider a nested array (NA) \cite{PiyaPPVnested} and a coprime array (CA) \cite{PPVCoprimePaper} given in Definition \ref{def:SA}, where $M_{\mathrm{co}} = N/2 - 1$ for the CA.
The SNR is defined as $\rho = P/(N_t\sigma^2)$, where equal power allocation is assumed across all transmit antennas and subchannels.

\subsection{Channel Capacity and Majorization}

\begin{figure}[!t]
\centering
\includegraphics[width=3.1in]{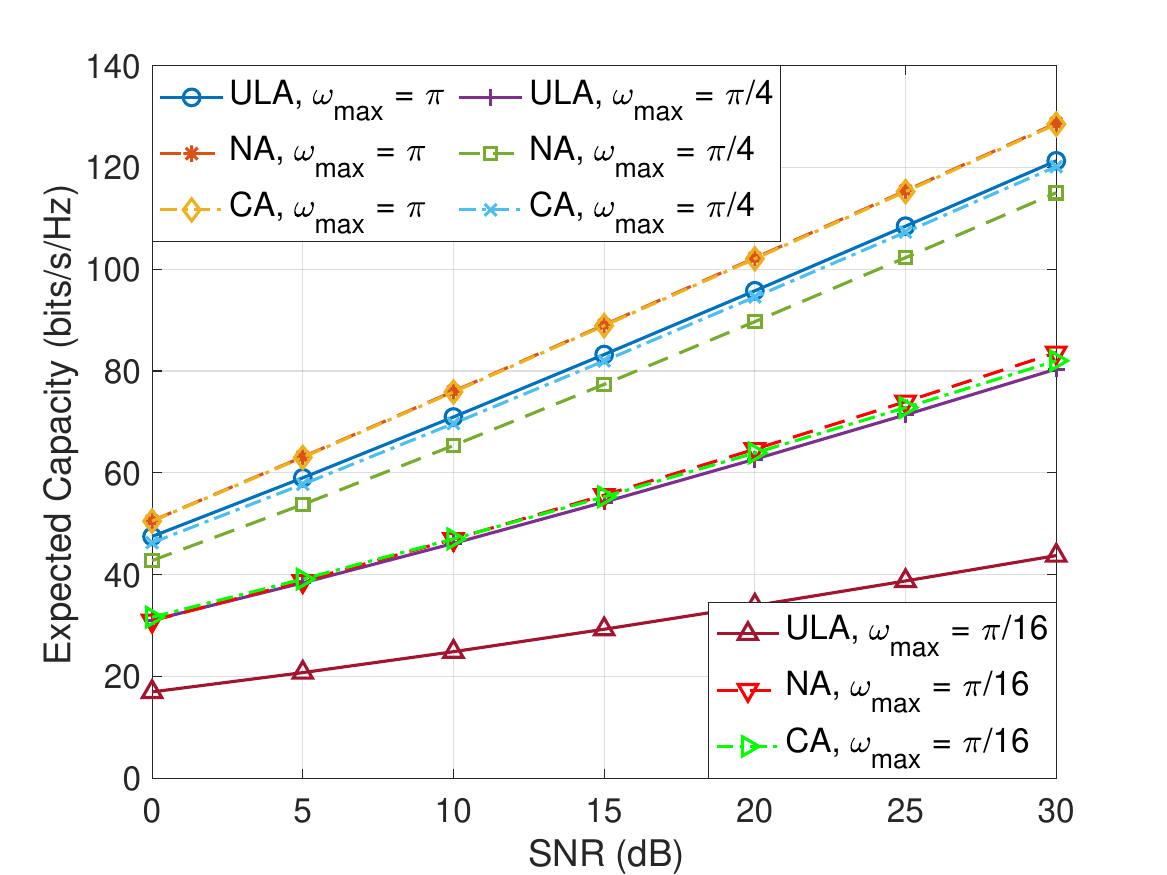}
\vspace{-3mm}
\caption{Expected (ergodic) channel capacity of the MIMO communication channel $\mathbf{H}_{\mathrm{c}}$ in \eqref{eq:H_c_k} in the spatial domain only as the SNR varies.
}
\label{fig:capacity_vs_SNR}
\end{figure}

\begin{figure}[!t]
\centering
\includegraphics[width=3.1in]{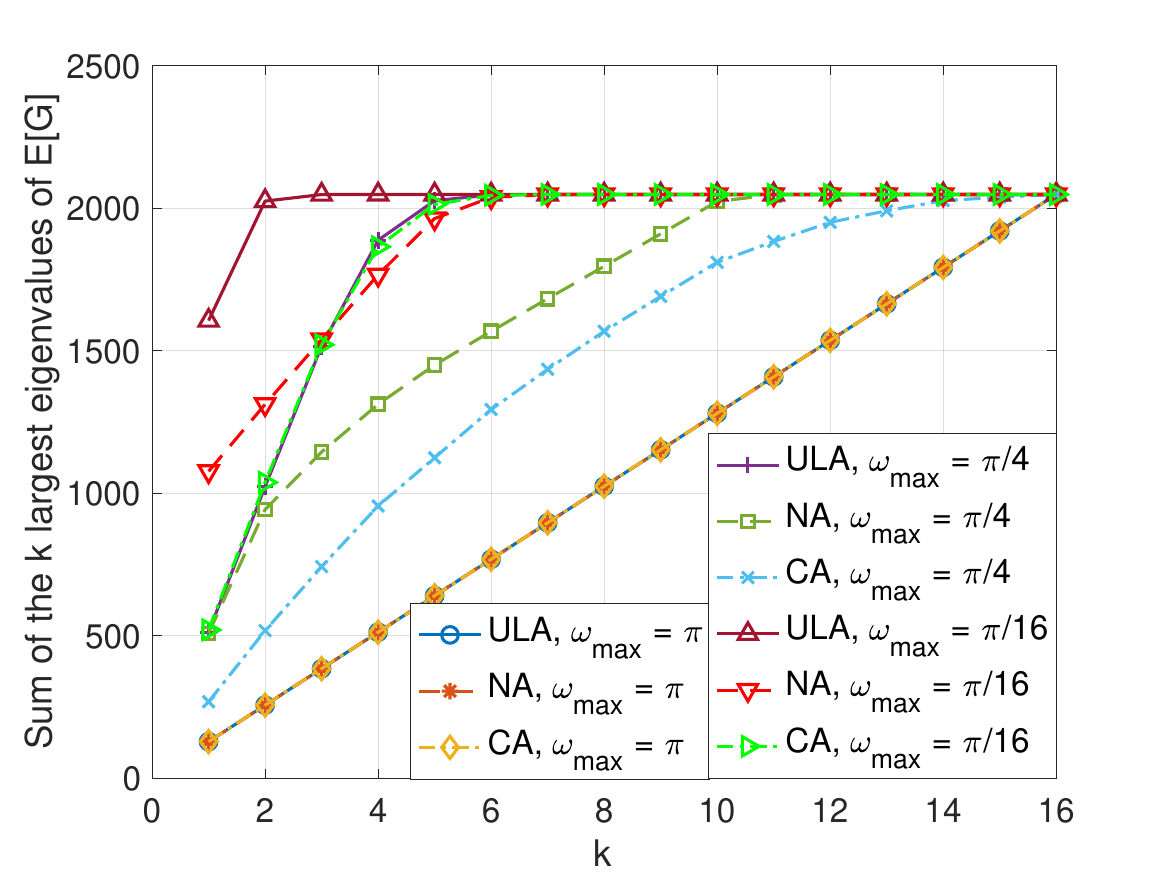}
\vspace{-3mm}
\caption{Sum of the $k$ largest eigenvalues of the expected Gram matrix $\E[\mathbf{G}]$ in the majorization criterion (\ref{eq:mjrz_criterion_G}). The SNR is fixed at 0 dB.
}
\label{fig:fk_vs_k}
\end{figure}

\begin{table}[!t]
% \scriptsize
\renewcommand{\arraystretch}{1.6}
\caption{Testing Results of the Majorization Criterion (Y: Yes)}
\label{tbl:mjrz_criterion}
\centering
\begin{tabular}{c|c|c|c|c|c|c|c}
\hline
$\omega_{\text{max}}$ & $\frac{\pi}{64}$ & $\frac{\pi}{32}$ & $\frac{\pi}{16}$ & $\frac{\pi}{8}$ & $\frac{\pi}{4}$ & $\frac{\pi}{2}$ & $\pi$ \\
\hline
NA majorized by ULA? & Y & Y & Y & Y & Y & Y & Y \\
CA majorized by ULA? & Y & Y & Y & Y & Y & Y & Y \\
\hline
\end{tabular}
\end{table}

In Fig. \ref{fig:capacity_vs_SNR}, we show the expected (ergodic) capacity, averaged over 1000 Monte Carlo trials, of the MIMO communication channel $\mathbf{H}_{\mathrm{c}}$ in \eqref{eq:H_c_k} in the spatial domain only as the SNR varies. A ULA, NA, and CA are compared. We consider $\omega_{\text{max}} = \pi, \pi/4, \pi/16$. For a fixed $\omega_{\text{max}}$, the capacity of the two SAs are consistently larger than that of ULA. Moreover, for a smaller $\omega_{\text{max}}$, corresponding to more tightly clustered paths in the angle domain that are typical in mmWave channels, the capacity gain of each SA over the ULA is larger.
The higher capacity of each SA is perfectly predicted by our majorization method. To see this, in Fig. \ref{fig:fk_vs_k}, we plot the sum of the $k$ largest eigenvalues of the expected Gram matrix $\E[\mathbf{G}]$ for each SA and ULA in the majorization criterion (\ref{eq:mjrz_criterion_G}) when SNR is fixed at 0 dB. For a fixed $\omega_{\text{max}}$, each of the NA and CA is majorized by the ULA, as summarized in Table \ref{tbl:mjrz_criterion} (the results for $\omega_{\text{max}} = \pi/2, \pi/8, \pi/32, \pi/64$ are omitted in Fig. \ref{fig:fk_vs_k} to ensure visual clarity). While all array geometries yield identical expected eigenvalues when $\omega_{\text{max}} = \pi$ (see (\ref{eq:whole_unif})), simulations show that SAs still slightly outperform the ULA in terms of ergodic capacity.

\begin{figure}[!t]
\centering
\includegraphics[width=3.1in]{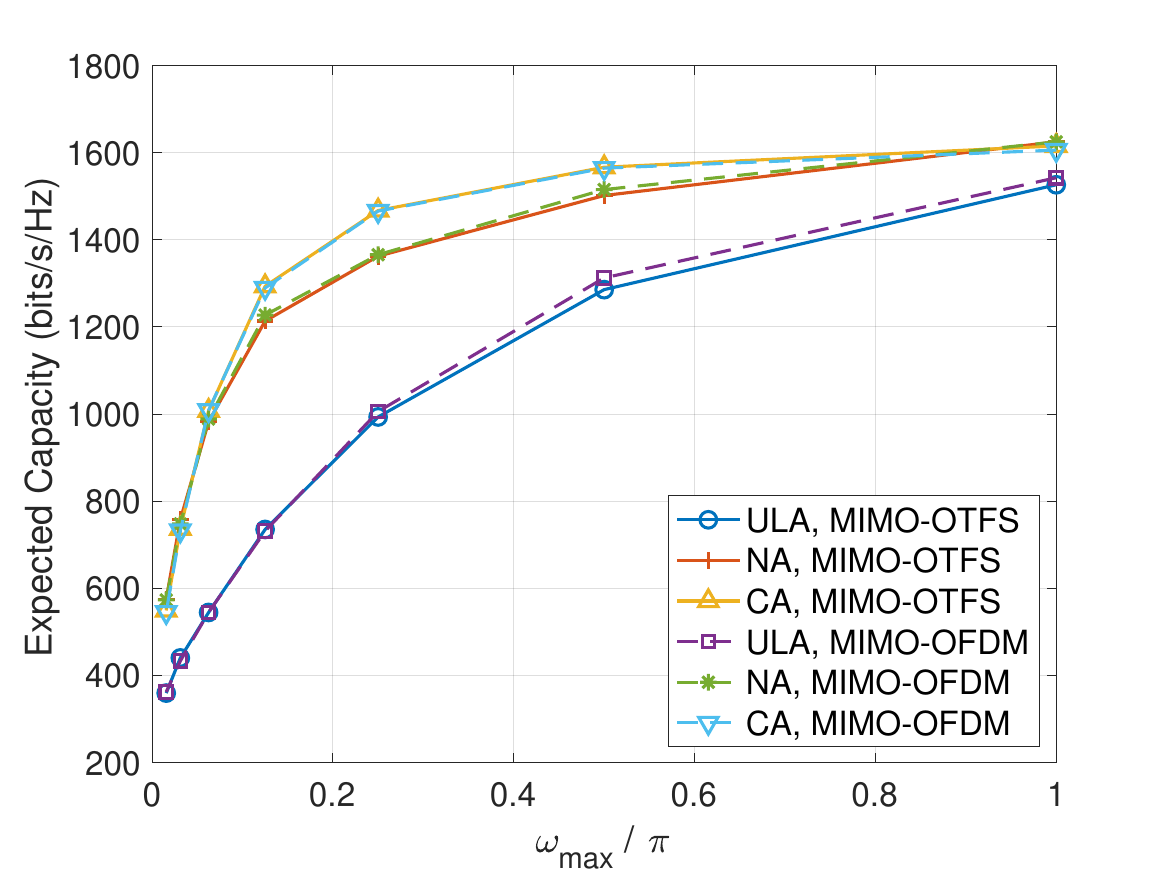}
\vspace{-3mm}
\caption{Expected (ergodic) channel capacity of the MIMO-OTFS channel $\mathbf{H}_{\mathrm{DD,c}}$ and MIMO-OFDM channel $\mathbf{H}_{\mathrm{TF,c}}$ in (\ref{eq:Hzc}) as $\omega_{\text{max}}$ varies. The SNR is fixed at 0 dB.
}
\label{fig:capacity_OTFS_OFDM_vs_wmax}
\end{figure}

\begin{figure}[!t]
\centering
\includegraphics[width=3.1in]{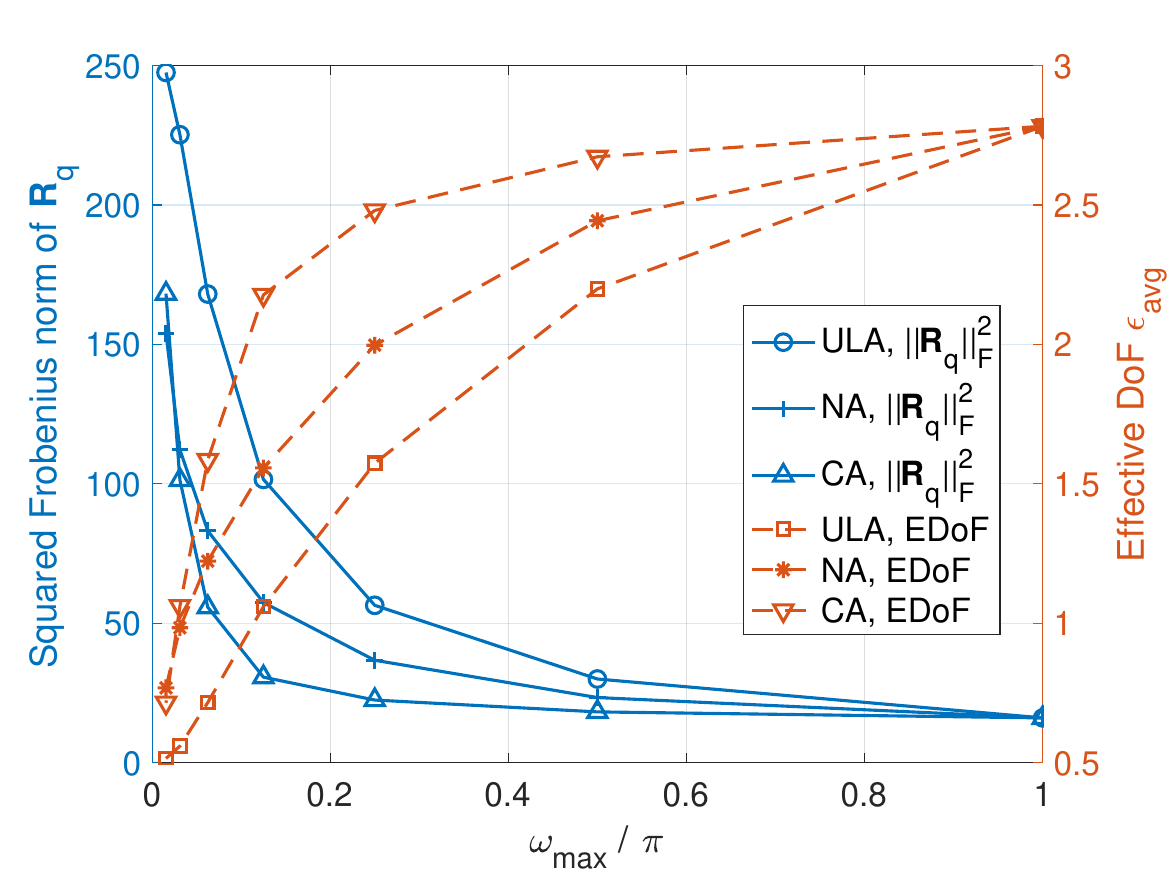}
\vspace{-3mm}
\caption{The squared Frobenius norm of the matrix $\mathbf{R}_q$, $q \in \{\mathrm{t},\mathrm{r}\}$, and average EDoF $\epsilon_{\mathrm{avg}}$ in (\ref{eq:EDoF_avg}) as $\omega_{\text{max}}$ varies. The SNR is fixed at 0 dB.
}
\label{fig:R_F2_EDoF_vs_wmax}
\end{figure}

The previous results for the MIMO channel in the spatial domain only are also valid for MIMO-OTFS and MIMO-OFDM systems. If we plot the ergodic capacity of MIMO-OTFS and MIMO-OFDM as SNR varies, the overall trend will be similar to that in Fig. \ref{fig:capacity_vs_SNR}. To offer more information, we plot, instead, the ergodic capacity of MIMO-OTFS and MIMO-OFDM as $\omega_{\text{max}}$ varies in Fig. \ref{fig:capacity_OTFS_OFDM_vs_wmax}. The SNR is fixed at 0 dB. For a fixed array geometry, MIMO-OTFS and MIMO-OFDM yield almost the same capacity, as also discussed in Remark~\ref{rmk:same_capacity}. Also, both NA and CA perform better than the ULA for all $\omega_{\text{max}}$.
As explained in Sec. \ref{ssc:edof}, while the majorization criterion (\ref{eq:mjrz_criterion}) is the theoretical ``on-off" condition for capacity improvement, $\|\mathbf{R}_q\|_F^2$, $q \in \{\mathrm{t},\mathrm{r}\}$, and average EDoF $\epsilon_{\mathrm{avg}}$ in (\ref{eq:EDoF_avg}) are quantifiable proxies that enable practical comparison between different array geometries. As shown in Fig. \ref{fig:R_F2_EDoF_vs_wmax}, $\|\mathbf{R}_q\|_F^2$ is roughly inversely proportional to the capacity, and $\epsilon_{\mathrm{avg}}$ is roughly proportional to the capacity.

\subsection{Angle CRB}

\begin{figure}[!t]
\centering
\includegraphics[width=3.1in]{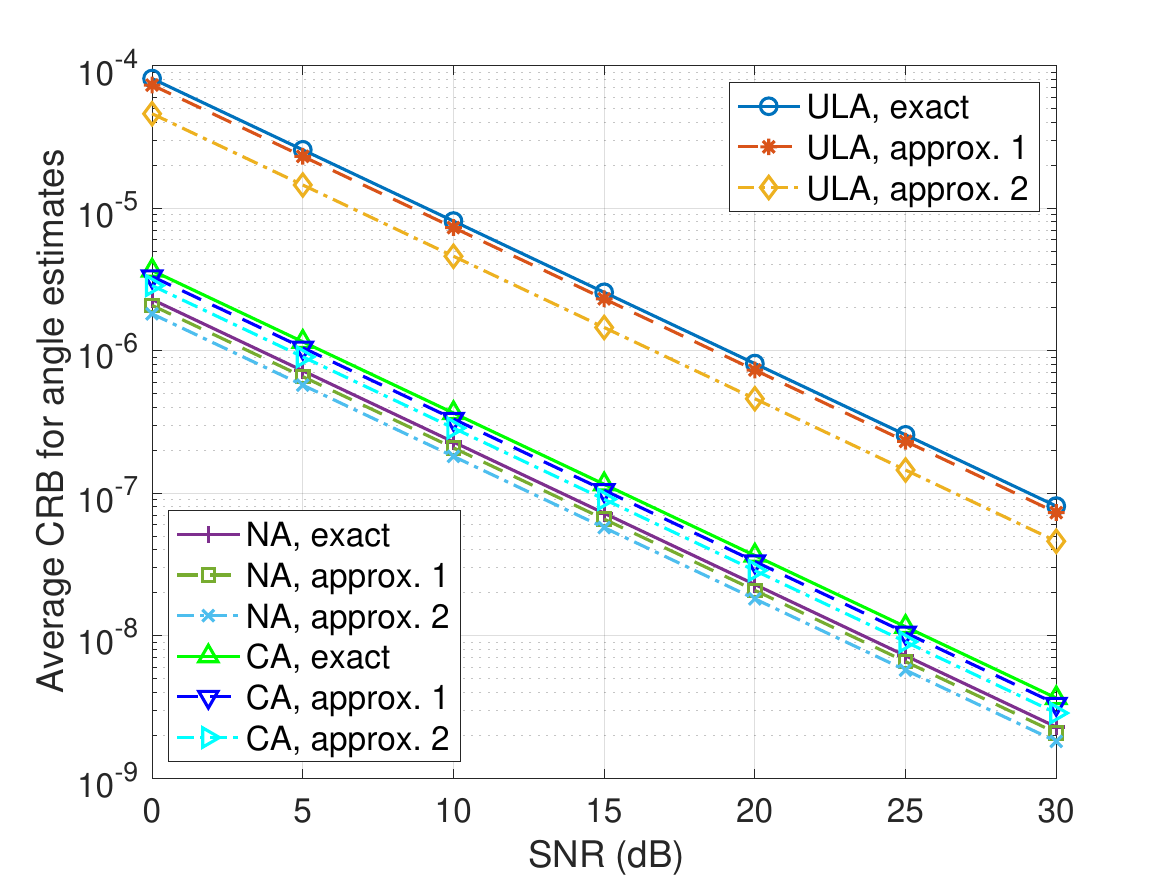}
\vspace{-3mm}
\caption{Angle CRB of the MIMO system \eqref{eq:y_s_k} under monostatic sensing as the SNR varies. The exact CRB is evaluated with (\ref{eq:CRB_exact}), where $\hat{\mathbf{R}}_{\mathrm{s}}$ is computed from (\ref{eq:R_s}). The approximation 1 of CRB is evaluated with (\ref{eq:CRB_Rs_approx}) and (\ref{eq:h_exact}). The approximation 2 of CRB is evaluated with (\ref{eq:CRB_Rs_approx}) and (\ref{eq:h_approx}).
}
\label{fig:AngleCRB_vs_SNR}
\end{figure}

\begin{figure}[!t]
\centering
\includegraphics[width=3.1in]{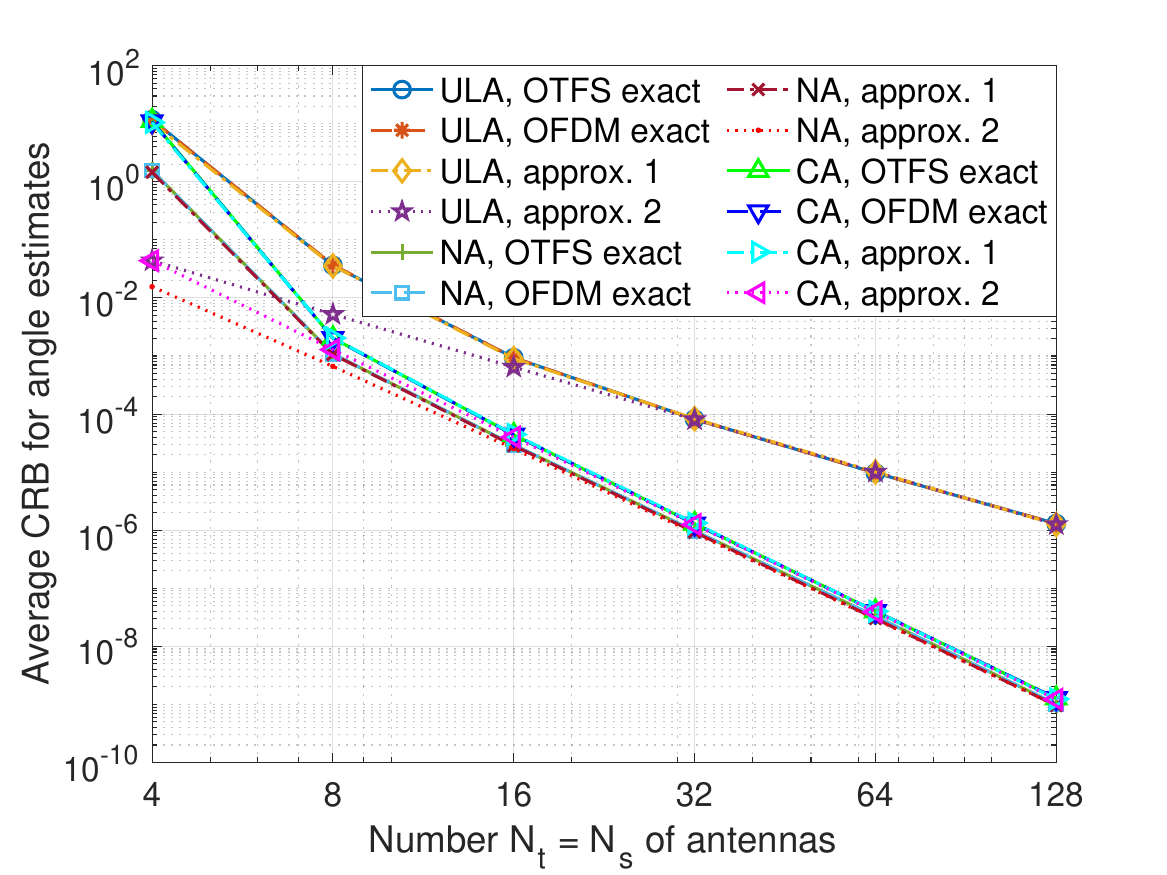}
\vspace{-3mm}
\caption{Angle CRB of the MIMO-OTFS and MIMO-OFDM systems (\ref{eq:MIMO_WF_s}) under monostatic sensing as the number $N_{\mathrm{t}} = N_{\mathrm{s}}$ of antennas varies. The SNR is fixed at 0 dB.
}
\label{fig:AngleCRB_vs_N}
\end{figure}

In Fig. \ref{fig:AngleCRB_vs_SNR}, we show the angle CRB measured in $\omega_{\mathrm{s},i}$ of the MIMO system \eqref{eq:y_s_k} under monostatic sensing as the SNR varies.
Since the conditional CRB model \cite{LiuCRB,StoicaStochasticCRB} is used, meaning the CRB is conditioned on the transmitted symbols $\mathbf{x}[k]$, we run 1000 Monte Carlo trials to get the average angle CRB.
We assume $\omega_{\text{max}} = \pi$.
We again compare the ULA, NA, and CA.
For each array, we plot 3 CRB curves: the exact CRB evaluated with (\ref{eq:CRB_exact}), where $\hat{\mathbf{R}}_{\mathrm{s}}$ is computed from (\ref{eq:R_s}), the approximation 1 of CRB evaluated with (\ref{eq:CRB_Rs_approx}) and (\ref{eq:h_exact}), and the approximation 2 of CRB evaluated with (\ref{eq:CRB_Rs_approx}) and (\ref{eq:h_approx}).
For each array, approximation 1 of CRB coincides with the exact CRB, which implies that (\ref{eq:R_s_approx}) is a very good approximation even when $K=32$ is not very large.
While there is a small gap between approximation 2 of CRB and the exact CRB, it does not alter the main result that the two SAs, NA and CA, always yield much smaller CRBs than the ULA.
In Fig. \ref{fig:AngleCRB_vs_N}, we show the angle CRB of the MIMO-OTFS and MIMO-OFDM systems \eqref{eq:MIMO_WF_s} as the number $N_{\mathrm{t}} = N_{\mathrm{s}}$ of antennas varies while the SNR is fixed at 0 dB.
For each array, we plot 4 CRB curves: the two exact CRBs evaluated with (\ref{eq:CRB__WF_exact}), where $\hat{\mathbf{R}}_{z,\mathrm{s}}$, $z \in \{\mathrm{DD}, \mathrm{TF}\}$, are computed from (\ref{eq:R_zs}), the approximation 1 of CRB evaluated with (\ref{eq:CRB_Rs_WF_approx}) and (\ref{eq:h_exact}), and the approximation 2 of CRB evaluated with (\ref{eq:CRB_Rs_WF_approx}) and (\ref{eq:h_approx}).
Again, the NA and CA always yield smaller CRBs than the ULA.
For each array, the exact CRBs of MIMO-OTFS and MIMO-OFDM are almost the same, which is consistent with our analysis.
Also, approximation 1 of CRB coincides with the exact CRBs, which implies that (\ref{eq:R_DDs_approx}) and (\ref{eq:R_TFs_approx}) are very good approximations even when $L=16$ is not very large.
Besides, the gap between approximation 2 and the other three curves decreases and vanishes in the end as the number of antennas gets larger. This suggests that approximation 2 and our order analysis of CRB in (\ref{eq:CRB_ULA}) and (\ref{eq:CRB_SA}) give good estimates of the CRB asymptotically.
In Fig. \ref{fig:AngleCRB_vs_N}, we can also see that if $N_{\mathrm{s}}$ is not too small (say, larger than 8), the CRB indeed decreases roughly as $O(1/N_{\mathrm{s}}^3)$ and $O(1/N_{\mathrm{s}}^5)$ for ULA and SA, respectively.

\subsection{CRB-Capacity Boundary}

\begin{figure}[!t]
\centering
\includegraphics[width=3.1in]{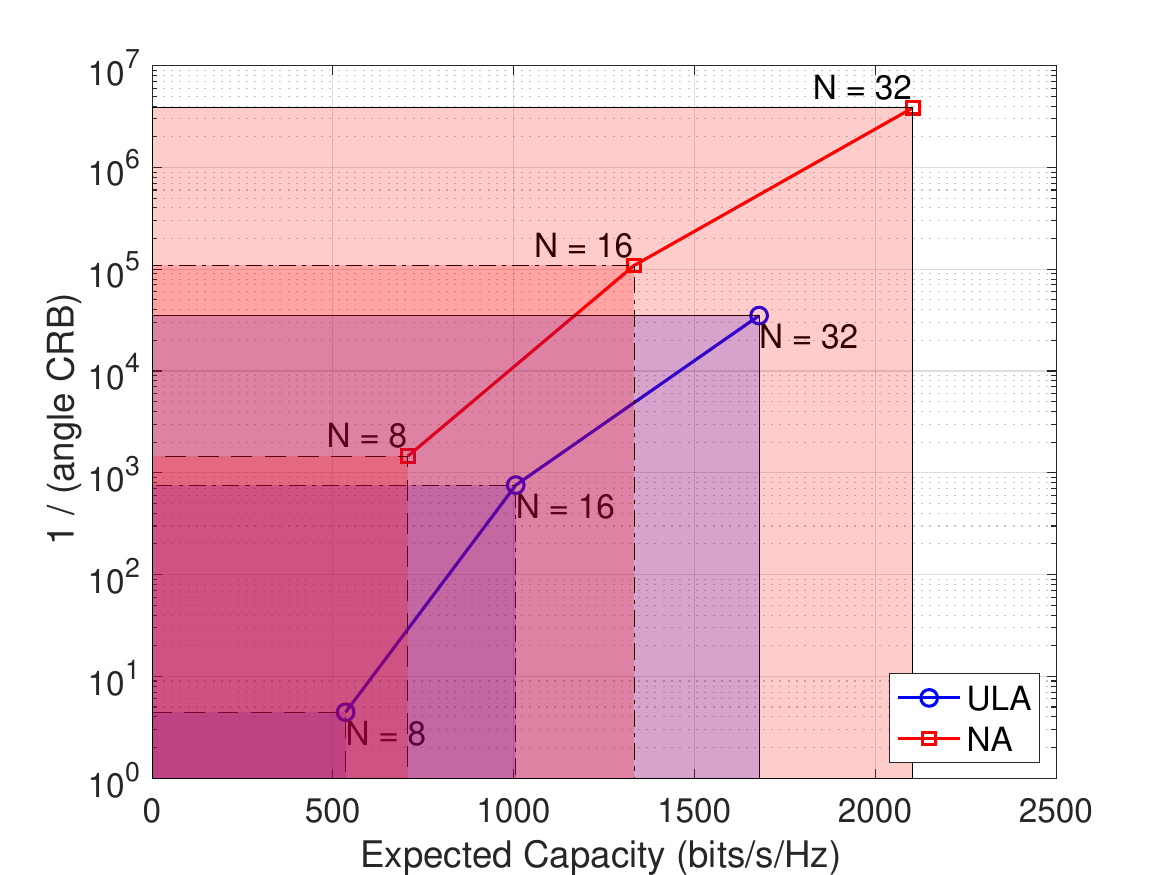}
\vspace{-3mm}
\caption{The achievable CRB-capacity region of the MIMO-OTFS ISAC system under monostatic sensing as the numbers $N_{\mathrm{t}} = N_{\mathrm{r}} = N_{\mathrm{s}} = N$ of antennas vary. The shaded areas represent the achievable regions of ULA and NA, respectively, for each number of antennas.
}
\label{fig:AngleCRB_vs_cpacity}
\end{figure}

In Fig. \ref{fig:AngleCRB_vs_cpacity}, we show the achievable CRB-capacity region of the MIMO-OTFS ISAC system under monostatic sensing as the numbers $N_{\mathrm{t}} = N_{\mathrm{r}} = N_{\mathrm{s}} = N$ of antennas vary.
For clear presentation, the results of MIMO-OFDM, which are close to those of MIMO-OTFS, are omitted.
Similarly, we only show the results of NA but not of CA for clarity.
The SNR is 0 dB, and $\omega_{\text{max}} = \pi/4$.
We see that for the same number of antennas, the achievable CRB-capacity region of NA is strictly larger than that of ULA.
Specifically, both the inverse of the angle CRB and the ergodic channel capacity of NA are higher than those of ULA.

\section{Conclusion}

In this paper, we have studied the fundamental performance limits of MIMO ISAC systems from a SA perspective. Our analytical results, supported by stochastic majorization, demonstrate that SAs significantly outperform conventional ULAs by providing higher ergodic capacity through spatial eigenvalue decorrelation and superior sensing precision due to an expanded physical aperture.
Specifically, the sensing gain over a ULA scales quadratically with the number of antennas.
We show that these conclusions also hold for MIMO-OTFS and MIMO-OFDM ISAC systems.
Future work will investigate optimized SA geometries that maximize communication capacity and minimize the angle-estimation CRB.
Future work will also consider how array configurations affect the estimation accuracy of range and Doppler parameters, providing a more comprehensive framework for future ISAC system design.

% \section*{Acknowledgments}

\appendices
\section{Proof of Theorem~\ref{thm:EDoF}} \label{adx:avgEDoF}
%To characterize the average behavior of the EDoF, we consider $\E[\sum_{i=1}^{D_{\mathrm{c}}} \lambda_i(\mathbf{G})]$ and $\E[\sum_{i=1}^{D_{\mathrm{c}}} \lambda_i^2(\mathbf{G})]$.
Using \eqref{eq:E_G} and the cyclic property of trace, we can obtain
\begin{equation}
\E \left[ \sum_{i=1}^{D_{\mathrm{c}}} \lambda_i(\mathbf{G}) \right] = \text{tr}(\mathbf{\E[G]})
= D_{\mathrm{c}} \sigma_{\alpha,\mathrm{c}}^2 N_{\mathrm{t}} N_{\mathrm{r}} \label{eq:sum_of_eig_G}
\end{equation}
for any array geometry.
Next, note that $\sum_{i=1}^{D_{\mathrm{c}}} \lambda_i^2(\mathbf{G}) = \text{tr}(\mathbf{G}\mathbf{G}^H) = \text{tr}(\mathbf{H}_{\mathrm{c}}\mathbf{H}_{\mathrm{c}}^H \mathbf{H}_{\mathrm{c}}\mathbf{H}_{\mathrm{c}}^H)$.
% \begin{equation}
%     \sum_{i=1}^{D_{\mathrm{c}}} \lambda_i^2(\mathbf{G}) = \text{tr}(\mathbf{G}\mathbf{G}^H) = \text{tr}(\mathbf{H}\mathbf{H}^H \mathbf{H}\mathbf{H}^H).
% \end{equation}
Substituting the summation for $\mathbf{H}_{\mathrm{c}}$ in \eqref{eq:H_c_k}, we have
\begin{IEEEeqnarray}{rCl}
    \sum_{i=1}^{D_{\mathrm{c}}} \lambda_i^2(\mathbf{G}) &=& \sum_{i} \sum_{j} \sum_{k} \sum_{m} \alpha_{\mathrm{c},i} \alpha_{\mathrm{c},j}^* \alpha_{\mathrm{c},k} \alpha_{\mathrm{c},m}^* \IEEEnonumber \\
    && {} \cdot \text{tr} \left( (\mathbf{a}_{\mathrm{r},i} \mathbf{a}_{\mathrm{t},i}^H \mathbf{a}_{\mathrm{t},j} \mathbf{a}_{\mathrm{r},j}^H) (\mathbf{a}_{\mathrm{r},k} \mathbf{a}_{\mathrm{t},k}^H \mathbf{a}_{\mathrm{t},m} \mathbf{a}_{\mathrm{r},m}^H) \right), \IEEEeqnarraynumspace
\end{IEEEeqnarray}
where we use the notation $\mathbf{a}_{\mathrm{r},i}$ to denote $\mathbf{a}_{\mathrm{r}}(\omega_{\mathrm{r},i})$.
In what follows, we use the fourth-moment properties of zero-mean i.i.d. circularly symmetric complex Gaussian (CSCG) path gains $\alpha_{\mathrm{c},i} \sim \mathcal{CN}(0, \sigma_{\alpha,\mathrm{c}}^2)$. The expectation $\E[\alpha_{\mathrm{c},i} \alpha_{\mathrm{c},j}^* \alpha_{\mathrm{c},k} \alpha_{\mathrm{c},m}^*]$ vanishes unless indices are paired such that $i=j, k=m$ or $i=m, j=k$.
When $i=j=k=m$, for each path $i$, the trace term becomes
\begin{align}
    \text{tr}(\mathbf{a}_{\mathrm{r},i} \mathbf{a}_{\mathrm{t},i}^H \mathbf{a}_{\mathrm{t},i} \mathbf{a}_{\mathrm{r},i}^H \mathbf{a}_{\mathrm{r},i} \mathbf{a}_{\mathrm{t},i}^H \mathbf{a}_{\mathrm{t},i} \mathbf{a}_{\mathrm{r},i}^H) &= N_{\mathrm{t}}^2 N_{\mathrm{r}}^2.
\end{align}
When $i=j, k=m, i \neq k$, for pairs of distinct paths $i$ and $k$, the trace term simplifies using the cyclic property of the trace:
\begin{align}
    \text{tr}(\mathbf{a}_{\mathrm{r},i} \mathbf{a}_{\mathrm{t},i}^H \mathbf{a}_{\mathrm{t},i} \mathbf{a}_{\mathrm{r},i}^H \mathbf{a}_{\mathrm{r},k} \mathbf{a}_{\mathrm{t},k}^H \mathbf{a}_{\mathrm{t},k} \mathbf{a}_{\mathrm{r},k}^H) &= N_{\mathrm{t}}^2 | \mathbf{a}_{\mathrm{r},i}^H \mathbf{a}_{\mathrm{r},k} |^2.
\end{align}
Moreover, when $i=m, j=k, i \neq k$, for pairs of distinct paths $i$ and $j$, the trace term can be derived similarly.
Combining the three cases, we obtain
\begin{IEEEeqnarray}{rCl}
    \E \left[ \sum_{i=1}^{D_{\mathrm{c}}} \lambda_i^2(\mathbf{G}) \right] &=& 2 D_{\mathrm{c}} \sigma_{\alpha,\mathrm{c}}^4 N_{\mathrm{r}}^2 N_{\mathrm{t}}^2 \IEEEnonumber \\
    && {} + \sigma_{\alpha,\mathrm{c}}^4 \sum_{i \neq j} N_{\mathrm{t}}^2 \E [ | \mathbf{a}_{\mathrm{r}}^H(\omega_{\mathrm{r},i}) \mathbf{a}_{\mathrm{r}}(\omega_{\mathrm{r},j}) |^2 ] \IEEEnonumber \\
    && {} + \sigma_{\alpha,\mathrm{c}}^4 \sum_{i \neq j} N_{\mathrm{r}}^2 \E [ | \mathbf{a}_{\mathrm{t}}^H(\omega_{\mathrm{t},i}) \mathbf{a}_{\mathrm{t}}(\omega_{\mathrm{t},j}) |^2 ], \IEEEeqnarraynumspace \label{eq:sum_of_sqr_eig_G}
\end{IEEEeqnarray}
where we use the fact that $\E[|\alpha_{\mathrm{c},i}|^4] = 2 \sigma_{\alpha,\mathrm{c}}^4$ for CSCG path gains.
Thus, the expected sum of squared eigenvalues depends on the average spatial correlation $\E[| \mathbf{a}_q^H(\omega_{q,i}) \mathbf{a}_q(\omega_{q,j}) |^2]$, $q\in\{\mathrm{t},\mathrm{r}\}$, over the angular domain.
Note that if we vary $\omega_{q,i}$ while fixing $\omega_{q,j}$, $| \mathbf{a}_q^H(\omega_{q,i}) \mathbf{a}_q(\omega_{q,j}) |^2$ is actually the \textit{beam pattern} of the array.
We can derive that
\begin{IEEEeqnarray}{rCl}
    | \mathbf{a}_q^H(\omega_{q,i}) \mathbf{a}_q(\omega_{q,j}) |^2 &= \sum_{k=0}^{N_q-1} \sum_{m=0}^{N_q-1} e^{j(\omega_{q,j} - \omega_{q,i})(n_{q,k} - n_{q,m})}. \IEEEnonumber
\end{IEEEeqnarray}
Since all $\omega_{q,i}$ are i.i.d, we obtain
\begin{IEEEeqnarray}{rCl}
    \E [ | \mathbf{a}_q^H(\omega_{q,i}) \mathbf{a}_q(\omega_{q,j}) |^2 ] &=& \sum_{k=0}^{N_q-1} \sum_{m=0}^{N_q-1} \E[e^{j\omega_{q,j}d_{q,k,m}}] \IEEEnonumber \\
    && {} \cdot \E[e^{-j\omega_{q,i}d_{q,k,m}}] \IEEEnonumber \IEEEeqnarraynumspace \\
    &=& \sum_{k=0}^{N_q-1} \sum_{m=0}^{N_q-1} |R_{q,k,m}|^2
    = \|\mathbf{R}_q\|_F^2 \IEEEeqnarraynumspace \label{eq:Rq_F2}
\end{IEEEeqnarray}
where $d_{q,k,m} = n_{q,k} - n_{q,m}$, and the entries of $\mathbf{R}_q$ are
% \begin{IEEEeqnarray}{rCl}
% R_{q,k,m} &=& \E[e^{j\omega (n_{q,k} - n_{q,m})}],
% \end{IEEEeqnarray}
$R_{q,k,m} = \E[e^{j\omega (n_{q,k} - n_{q,m})}],$
as established in \eqref{eq:R_rkm}.
Plugging \eqref{eq:sum_of_eig_G}, \eqref{eq:sum_of_sqr_eig_G}, and \eqref{eq:Rq_F2} into the definition (\ref{eq:EDoFavg}) of $\epsilon_{\mathrm{avg}}$ leads to (\ref{eq:EDoF_avg}),
% \begin{IEEEeqnarray}{rCl}
% \epsilon_{\mathrm{avg}} &=& \frac{D_{\mathrm{c}}}{2 + (D_{\mathrm{c}} - 1)[ \|\mathbf{R}_{\mathrm{r}}\|_F^2 / N_{\mathrm{r}}^2 + \|\mathbf{R}_{\mathrm{t}}\|_F^2 / N_{\mathrm{t}}^2]}, \IEEEeqnarraynumspace
% \end{IEEEeqnarray}
which completes the proof.

\bibliographystyle{IEEEtran}
\bibliography{IEEEabrv,bibliography.bib}

% \newpage
 
% \vspace{11pt}

% \bf{If you include a photo:}\vspace{-33pt}
% \begin{IEEEbiography}[{\includegraphics[width=1in,height=1.25in,clip,keepaspectratio]{fig1}}]{Michael Shell}
% Use $\backslash${\tt{begin\{IEEEbiography\}}} and then for the 1st argument use $\backslash${\tt{includegraphics}} to declare and link the author photo.
% Use the author name as the 3rd argument followed by the biography text.
% \end{IEEEbiography}

% \vspace{11pt}

% \bf{If you will not include a photo:}\vspace{-33pt}
% \begin{IEEEbiographynophoto}{John Doe}
% Use $\backslash${\tt{begin\{IEEEbiographynophoto\}}} and the author name as the argument followed by the biography text.
% \end{IEEEbiographynophoto}

\vfill

\end{document}